\theoremstyle{plain}
\newtheorem{proposition}{Proposition}
\newtheorem{lemma}{Lemma}
\newtheorem{theorem}{Theorem}
\newtheorem{assumption}{Assumption}
\newtheorem{remark}{Remark}
\newtheorem{gauge}{Gauge choice}
\def\p{\partial}
\def\bmd{{\bm d}}
\def\bme{{\bm e}}
\def\bmg{{\bm g}}
\def\bmh{{\bm h}}
\def\bml{{\bm l}}
\def\bmn{{\bm n}}
\def\bmm{{\bm m}}
\def\bmu{{\bm u}}
\def\bmB{{\bm B}}
\def\bmeta{{\bm \eta}}
\def\bmvarphi{{\bm \varphi}}
\def\bmsigma{{\bm \sigma}}
\def\bmpartial{{\bm \partial}}
\def\nablasl{/\kern-0.58em\nabla}
\def\Deltasl{/\kern-0.58em\Delta}
\begin{document}

%%%%%%%%%%%%%%%%%%%%%%%%%%%%%%%%%%%%%%%%%%%%%%%%%%%%%%%%%%%%%
\title{\textbf{The conformal Einstein field equations and the local
    extension of future null infinity}}
%%%%%%%%%%%%%%%%%%%%%%%%%%%%%%%%%%%%%%%%%%%%%%%%%%%%%%%%%%%%%

\author[,1,3]{Peng
  Zhao \footnote{E-mail address:{\tt p.zhao@qmul.ac.uk}}}
\author[,2]{David Hilditch \footnote{E-mail address:{\tt
      david.hilditch@tecnico.ulisboa.pt}}}
\author[,3]{Juan A. Valiente
    Kroon \footnote{E-mail address:{\tt
        j.a.valiente-kroon@qmul.ac.uk}}}
\affil[1]{College of Education for the Future, Beijing Normal University
 at Zhuhai, No.18, Jinfeng Road, Tangjiawan, Zhuhai City, Guangdong
  Province, 519087, P.R.China.}        
\affil[2]{CENTRA, Departamento de F\'isica, Instituto Superior
  T\'ecnico – IST, Universidade de Lisboa – UL, Avenida Rovisco Pais
  1, 1049 Lisboa, Portugal.}
\affil[3]{School of Mathematical Sciences, Queen Mary, University of
    London, Mile End Road, London E1 4NS, United Kingdom.}

\maketitle

\begin{abstract}
We make use of an improved existence result for the characteristic
initial value problem for the conformal Einstein equations to show
that given initial data on two null hypersurfaces~$\mathcal{N}_\star$
and~$\mathcal{N}'_\star$ such that the conformal factor (but not its
gradient) vanishes on a section of~$\mathcal{N}_\star$ one recovers a
portion of null infinity. This result combined with the theory of the
hyperboloidal initial value problem for the conformal Einstein field
equations allows to show the semi-global stability of the Minkowski
spacetime from characteristic initial data.
\end{abstract}

%%%%%%%%%%%%%%%%%%%%%%%%%%%%%%%%%%%%%%%%%%%%%%%%%%%%%%%%%%%%%
\section{Introduction}
%%%%%%%%%%%%%%%%%%%%%%%%%%%%%%%%%%%%%%%%%%%%%%%%%%%%%%%%%%%%%

This is the third article in a series devoted to the analysis of the
characteristic initial value problem (CIVP) for the Einstein field
equations. This research programme is motivated by the techniques
introduced by Luk in~\cite{Luk12} to obtain local existence results
for the Einstein field equations which are optimal in the sense that
one obtains a solution in a neighbourhood of both the initial null
hypersurfaces and not only in a neighbourhood of their intersection as
in Rendall's original approach~\cite{Ren90}. In Paper~I of this
series, see~\cite{HilValZha19}, we obtained an improved local
existence result for the CIVP for the Einstein field equations
expressed in terms of the Newman-Penrose formalism and a gauge due to
Stewart ---see~\cite{Ste91}. This result demonstrates the robustness
of Luk's approach, showing that the specific choice of gauge employed
in~\cite{Luk12} is not crucial. In Paper~II of this series,
see~\cite{HilValZha20}, we applied Luk's method to obtain a local
existence result for the asymptotic CIVP for Friedrich's conformal
Einstein field equations. In this problem, one of the initial
hypersurfaces is past null infinity while the other is an incoming
light cone ---in an alternative version of this problem one prescribes
data on future null infinity and an outgoing null hypersurface.

\smallskip
\noindent
\textbf{The problem.} In this article we make use of a CIVP for the
conformal Einstein field equations to study the question of \emph{the
  local extendibility of null infinity}. To this end, initial data is
prescribed on two future oriented null hypersurfaces intersecting a
2-dimensional surface with the topology of the
2-sphere~$\mathbb{S}^2$. These null hypersurfaces are assumed to
intersect future null infinity, $\mathscr{I}^+$. The question to be
addressed is whether it is possible to recover a portion of
future null infinity lying in the causal future of the initial
hypersurfaces. Observe that in the future null infinity version of the
asymptotic CIVP analysed in Paper~II, the solution constructed is
located in the causal past of the initial hypersurfaces ---see
Figure~\ref{Fig:AsymptoticCharacteristicProblems}. The question of the
local extendibility of null infinity through a CIVP has been studied
by Li \& Zhu in~\cite{LiZhu18} directly through the Einstein field
equations. In this work, in order to encode the asymptotic
behaviour of the various field at infinity it is necessary to make use
of weighted function spaces and norms. Moreover, it is necessary to
consider the existence of solutions to the field equations on a domain
with an infinite extent. In this article we make use of an alternative
setup that offers a natural way to address the local extendibility of
null infinity: \emph{the use of a conformal representation of the
  spacetime and the conformal Einstein field equations} ---see
e.g.~\cite{CFEBook}.

\smallskip
\noindent
\textbf{Conformal methods.} The use of conformal methods in the study
of the local extendibility of (future) null infinity allows to
transform the question of existence of solutions to hyperbolic
evolution equations on an infinite domain into the study of solutions
on a finite region. Moreover, the asymptotic decay of the various
fields fields is conveniently encoded through regularity of the
fields. Accordingly, it is possible to work with standard (unweighted)
function spaces and norms. Luk's strategy to analyse the CIVP allows
to ensure the existence of solutions on \emph{causal diamonds} having
a \emph{long} and a \emph{short} direction ---see
Figure~\ref{Fig:AsymptoticCharacteristicProblems}. Existence in the
long direction is ensured as long as one has control on the initial
data. On the other hand, the extent of the short direction is
restricted by the potential appearance of \emph{singularities in
  finite time} due to the presence of \emph{Riccati-type} equations in
the evolution system. In the present problem the conformal framework
provides a natural causal diamond with one of its sides lying on one
of the null initial hypersurfaces, $\mathcal{N}_\star$, and a short
side covering a portion of null infinity. Although from the point of
view of the conformal representation this domain has a finite size, in
the physical spacetime it actually represents an infinite domain
contained between two \emph{parallel} null hypersurfaces. The main
result of this article is that \emph{it is possible to ensure the
  existence of solutions to the conformal Einstein field equations on
  the causal diamond with sides on~$\mathcal{N}_\star$
  and~$\mathscr{I}^+$}. Thus, it is possible to recover a portion of
null infinity to the future of $\mathcal{N}_\star$ ---i.e. we have
extended~$\mathscr{I}^+$.

%%%%%%%%%%%%%%%%%%%%%%%%%%%%%%%%%%%%%%%%%%%%%%%%%%%%%%%%%%%%%
\begin{figure}[t]
\centering
\includegraphics[width=\textwidth]{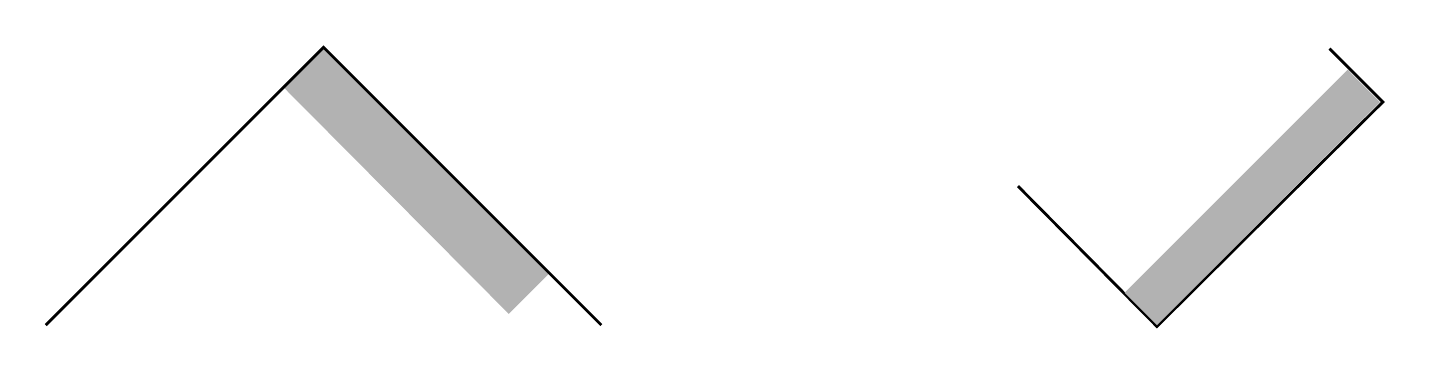}
\put(-420,100){(a)} \put(-140,100){(b)} \put(-300,80){$\mathscr{I}^+$}
\put(-15,90){$\mathscr{I}^+$} \put(-392,50){$\mathcal{N}_\star$}
\put(-35,50){$\mathcal{N}_\star$} \put(-120,30){$\mathcal{N}_\star'$}
\caption{Comparison between the asymptotic characteristic problem (a)
  and the standard characteristic problem (b) for the conformal
  Einstein field equations. In the future null infinity version of the
  asymptotic CIVP initial data is prescribed on future null infinity
  and on an outgoing lightcone~$\mathcal{N}_\star$. The neighbourhood
  theorem allows the recovery of a narrow causal diamond along null
  infinity. The length of this rectangle is limited by the portion
  of~$\mathscr{I}^+$ on which one has control of the initial
  data. Observe that region of existence of solutions lies in the
  causal past of the null hypersurfaces and that the existence of, at
  least a portion of null infinity is \emph{a priori} assumed. In the
  characteristic problem considered in this article the initial data
  is prescribed on two standard null hypersurfaces~$\mathcal{N}_\star$
  and~$\mathcal{N}_\star'$ with at least one of
  them~($\mathcal{N}_\star$) intersecting the conformal boundary. The
  improved existence result allows then to recover a narrow rectangle
  whose long side lies on~$\mathcal{N}_\star$ and the short one gives
  a portion of future null infinity. Observe that the region of
  existence is on the causal future of the initial hypersurfaces and
  that, \emph{a priori} only the existence of a cut of null infinity
  is assumed.
\label{Fig:AsymptoticCharacteristicProblems}}
\end{figure}
%%%%%%%%%%%%%%%%%%%%%%%%%%%%%%%%%%%%%%%%%%%%%%%%%%%%%%%%%%%%%

\smallskip
\noindent
\textbf{The hyperboloidal initial value problem.} Historically, the
first resolution of the local extendibility of null infinity has been
given by Friedrich in his analysis of the hyperboloidal initial value
problem for the conformal Einstein field equations
---see~\cite{Fri84,Fri86b}, also~\cite{LueVal09,CFEBook}. In this
case, initial data is prescribed on a spatial
hypersurface~$\mathcal{H}_\star$ which intersects null infinity. Due
to the formal regularity of the conformal Einstein field equations at
the conformal boundary, the standard local existence theory for
symmetric hyperbolic systems allows to recover a slab of spacetime in
the causal future of~$\mathcal{H}_\star$ which covers a portion of
null infinity ---see Figure~\ref{Fig:HyperboloidalProblem}. A
particular drawback of this approach, in contrast with the CIVP, is
the increased complexity in solving the constraint equations
on~$\mathcal{H}_\star$ and obtaining conditions ensuring peeling (see
below) ---see~\cite{AndChrFri92,AndChr93,AndChr94}. In view of the
latter and the historical and practical relevance of the CIVP it is of
interest to discuss the extendibility of null infinity form this
alternative point of view.

%%%%%%%%%%%%%%%%%%%%%%%%%%%%%%%%%%%%%%%%%%%%%%%%%%%%%%%%%%%%%
\begin{figure}[t]
\centering
\includegraphics[width=0.3\textwidth]{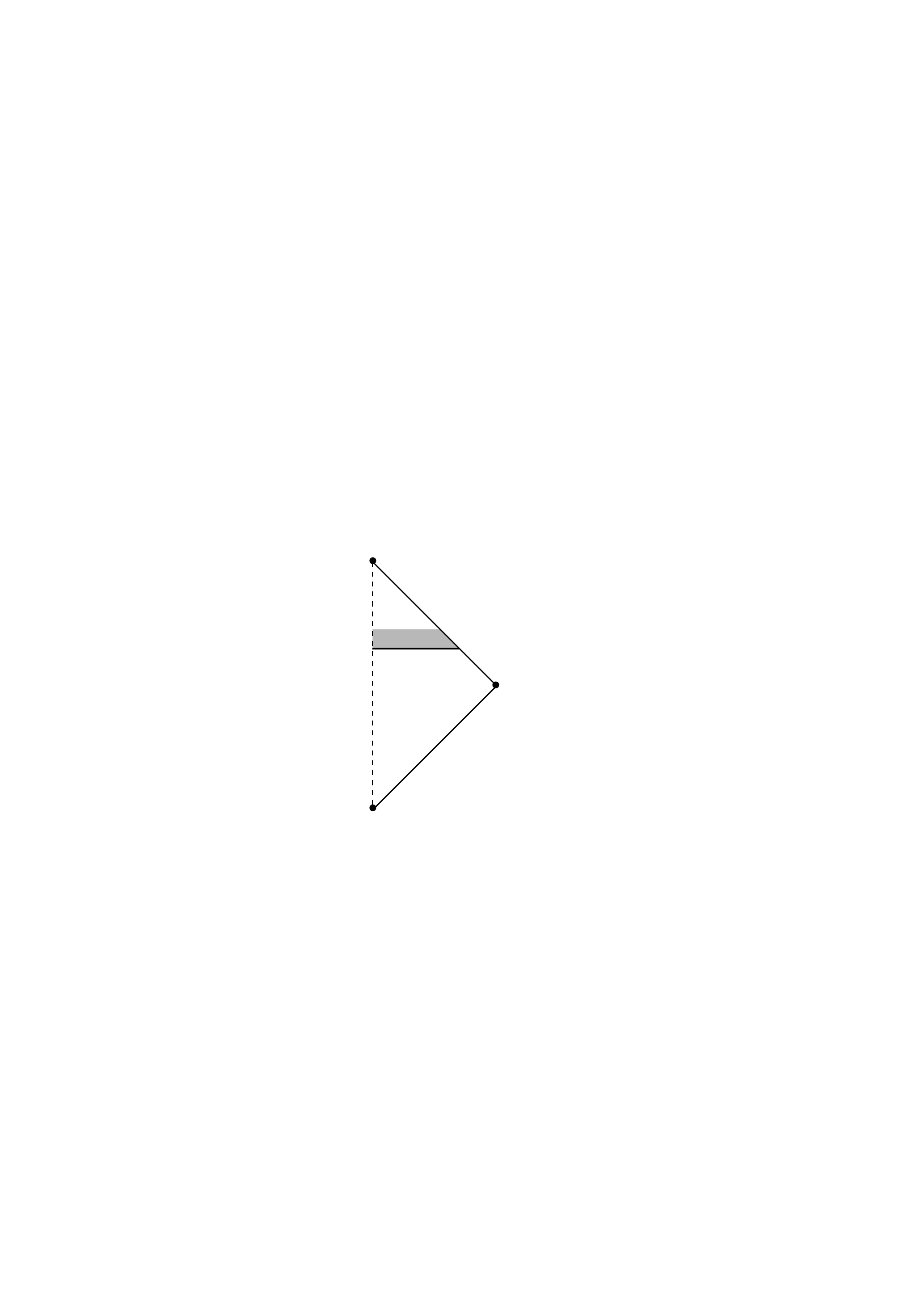}
\put(-50,140){$\mathscr{I}^+$}
\put(-70,105){$\mathcal{H}_\star$}
\caption{A schematic depiction of a hyperboloidal IVP in an
    asymptotically flat spacetime. Suitable initial data is given
    on~$\mathcal{H}_\star$, and local existence is guaranteed in the
    (local) shaded region provided regularity and symmetric
    hyperbolicity of the equations of motion under consideration.
\label{Fig:HyperboloidalProblem}}
\end{figure}
%%%%%%%%%%%%%%%%%%%%%%%%%%%%%%%%%%%%%%%%%%%%%%%%%%%%%%%%%%%%%

\smallskip
\noindent
\textbf{Peeling.} The problem here considered is closely related to
one of the central issues on the study of the asymptotics of the
gravitational field: \emph{peeling}. As part of the formulation of the
CIVP here considered it is necessary to prescribe the value of one of
the components of the Weyl tensor~($\phi_0$) on one of the null
hypersurfaces. This component is usually loosely interpreted as
describing some sort of incoming radiation ---see~\cite{Sze65}. For
simplicity, in the present analysis it is assumed that \emph{the
  component~$\phi_0$ is smooth at null infinity}. It follows that on
the portion of future null infinity recovered by the optimal local
existence result for the CIVP \emph{the Weyl tensor satisfies the
  peeling behaviour}. If a finite regularity is assumed below a
certain threshold, then the assumptions of the \emph{peeling theorem}
are no longer satisfied ---see e.g.~\cite{PenRin86,Ste91,CFEBook}.

\smallskip
\noindent
\textbf{Differences with the asymptotic characteristic problem}. The
CIVP considered in this article differs from that in Paper~II in that
in the former one of the initial hypersurfaces coincides with the
conformal boundary. This leads to a number of simplifications in the
gauge and equations. In the present case, both initial null
hypersurfaces lie in the physical spacetime ---except for their
intersections with null infinity. Thus, one has to deal with a
somewhat more general set up. Nevertheless, a careful inspection of
the analysis of Paper~II shows that all the main assertions and
estimates hold in the present situation. Roughly speaking these
estimates control the size of the~$L^2$-norm of the fields appearing
in the conformal Einstein field equations in terms of the size of the
initial data. Thus, if the data is finite, so will also the solutions
to the conformal Einstein field equations. The existence of solutions
on the causal diamond containing a portion of null infinity then
follows from a \emph{last slice argument} in which the basic existence
domain arising from the use of \emph{Rendall's reduction
  strategy}~\cite{Ren90} is progressively extended.

\smallskip
\noindent
\textbf{An application: the semi-global stability of Minkowski
  spacetime from a Cauchy-characteristic problem.} As an application
of our result on the local extendibility of null infinity in the CIVP
for the conformal Einstein field equations, we obtain a semi-global
stability result for the Minkowski spacetime. The idea behind this
construction is the following: given standard Cauchy initial data for
the conformal Einstein field equations on a compact spacelike
domain~$\mathcal{K}_\star$, and characteristic data up to the
conformal boundary on an outgoing null
hypersurface~$\mathcal{N}_\star$ emanating from the boundary of the
spacelike domain the development of Cauchy data implies complementary
characteristic data on the Cauchy horizon~$H^+(\mathcal{K}_\star)$ of
the development. In turn, setting~$\mathcal{N}_\star
=H^+(\mathcal{K}_\star)$, the local extendibility result of null
infinity can be used to obtain two (intersecting) causal diamonds
along the initial null hypersurfaces~$\mathcal{N}_\star$
and~$\mathcal{N}_\star'$ both including a portion of future null
infinity. Accordingly, the existence domain along the initial
hypersurface~$\mathcal{K}_\star\cup\mathcal{N}_\star$ will contain a
hyperboloidal hypersurface~$\mathcal{H}_\star$. If the initial data
on~$\mathcal{K}_\star\cup\mathcal{N}_\star$ is assumed to be suitably
close to data for the Minkowski spacetime, then the initial data
induced on~$\mathcal{H}_\star$ will also be close to Minkowski
hyperboloidal data. One can then use Friedrich's semi-global existence
stability results in~\cite{Fri86b} ---see also~\cite{LueVal09}--- to
recover the whole of the domain of
dependence~$D^+(\mathcal{K}_\star\cup \mathcal{N}_\star)$. For a
recent alternative proof of the stability of the Minkowski spacetime
from Cauchy-characteristic data which makes use of the full machinery
of vector field methods ---see~\cite{Gra20}.

%%%%%%%%%%%%%%%%%%%%%%%%%%%%%%%%%%%%%%%%%%%%%%%%%%%%%%%%%%%%%
\subsection*{Conventions}
%%%%%%%%%%%%%%%%%%%%%%%%%%%%%%%%%%%%%%%%%%%%%%%%%%%%%%%%%%%%%

This article follows the conventions and notation of Paper~I and
Paper~II, which, in turn follow those used in the
monograph~\cite{CFEBook}. The latter reference is consistent with the
presentation in J. Stewart's book~\cite{Ste91} and Penrose \&
Rindler~\cite{PenRin84,PenRin86}.

%%%%%%%%%%%%%%%%%%%%%%%%%%%%%%%%%%%%%%%%%%%%%%%%%%%%%%%%%%%%%
\section{The conformal vacuum Einstein field equations}
%%%%%%%%%%%%%%%%%%%%%%%%%%%%%%%%%%%%%%%%%%%%%%%%%%%%%%%%%%%%%

The main technical tool for the analysis of the the local extension of
future null infinity are Friedrich's \emph{conformal vacuum Einstein
  field equations (CEFE)}. The equations are a conformal
representation of the vacuum Einstein field equations. Crucially, they
are \emph{formally regular} on the conformal boundary and imply, away
from it, a solution to the vacuum Einstein field equations. The
structural properties of the CEFE and its derivation have been amply
discussed in the literature ---see~\cite{Fri84,CFEBook}.

\smallskip
In what follows, let~($\mathcal{M}, \bmg, \Xi$) denote a conformal
extension of a vacuum \emph{asymptotically simple spacetime}
(see~\cite{PenRin86,CFEBook} for a definition)~$(\tilde{\mathcal{M}}$,
$\tilde{\bmg}$. The \emph{physical metric}~$\tilde\bmg$ and the
\emph{unphysical metric}~$\bmg$ are related to each other via the
formula~$\bmg=\Xi^2\tilde{\bmg}$. By assumption, the unphysical
manifold~$\mathcal{M}$ has a boundary ---the \emph{conformal
  boundary}, $\mathscr{I}\equiv\p\mathcal{M}$, corresponding to
the future endpoints of null geodesics. The conformal factor
satisfies~$\Xi>0$ on $\mathcal{M}\setminus\mathscr{I}$
and~$\Xi=0$,~$\bmd\Xi\neq0$ on $\mathscr{I}$ ---that is,~$\Xi$ is a
boundary defining function.

\smallskip
The metric vacuum conformal Einstein field equations with vanishing
Cosmological constant are given by the system
\begin{subequations}
\begin{align}
&\nabla_a\nabla_b\Xi=-\Xi L_{ab}+sg_{ab}, \label{CFE1}\\
&\nabla_as=-L_{ac}\nabla^c\Xi, \label{CFE2}\\
&\nabla_cL_{db}-\nabla_dL_{cb}=\nabla_a\Xi d^a{}_{bcd}, \label{CFE3}\\
&\nabla_a d^a{}_{bcd}=0, \label{CFE4}\\
&6\Xi s-3\nabla_c\Xi\nabla^c\Xi=0 \label{CFE5},\\
&R^c{}_{dab} = C^c{}_{dab} + 2 \big(\delta^c{}_{[a} L_{b]d} - g_{d[a}
   L_{b]}{}^c\big), \label{CFE6}
\end{align}
\end{subequations}
where
\begin{align*}
L_{ab}\equiv\frac{1}{2}R_{ab}-\frac{1}{12}Rg_{ab},\qquad
d^a_{\phantom{a}bcd}\equiv\Xi^{-1}C^a_{\phantom{a}bcd},\qquad
s\equiv\frac{1}{4}\nabla^a\nabla_a\Xi+\frac{1}{24}R\Xi,
\end{align*}
are, respectively, the \emph{Schouten tensor}, the \emph{rescaled Weyl
  tensor} and the \emph{Friedrich scalar}. For convenience we also
define
\begin{align*}
\Sigma_a \equiv \nabla_a \Xi.
\end{align*}

The analysis of this article will be carried out with a
\emph{Newman-Penrose (NP)} version of the above equations in which the
various tensor field and equations are expressed in terms of a null
(NP) tetrad. The detailed form of these equations can be found in the
Appendix of Paper~II.

%%%%%%%%%%%%%%%%%%%%%%%%%%%%%%%%%%%%%%%%%%%%%%%%%%%%%%%%%%%%%
\section{The geometry of the problem}
\label{Section:GeometryProblem}
%%%%%%%%%%%%%%%%%%%%%%%%%%%%%%%%%%%%%%%%%%%%%%%%%%%%%%%%%%%%%

In this section we discuss the geometric setting of the local
extension of future null infinity. This is very similar to the one
used in Papers~I and~II and makes use of a gauge which we will call
\emph{Stewart's gauge}. The reader is referred
to~\cite{HilValZha19,HilValZha20} for further details and discussion
---see also~\cite{SteFri82,Ste91}.

%%%%%%%%%%%%%%%%%%%%%%%%%%%%%%%%%%%%%%%%%%%%%%%%%%%%%%%%%%%%%
\begin{figure}[t]
\centering
\includegraphics[width=0.8\textwidth]{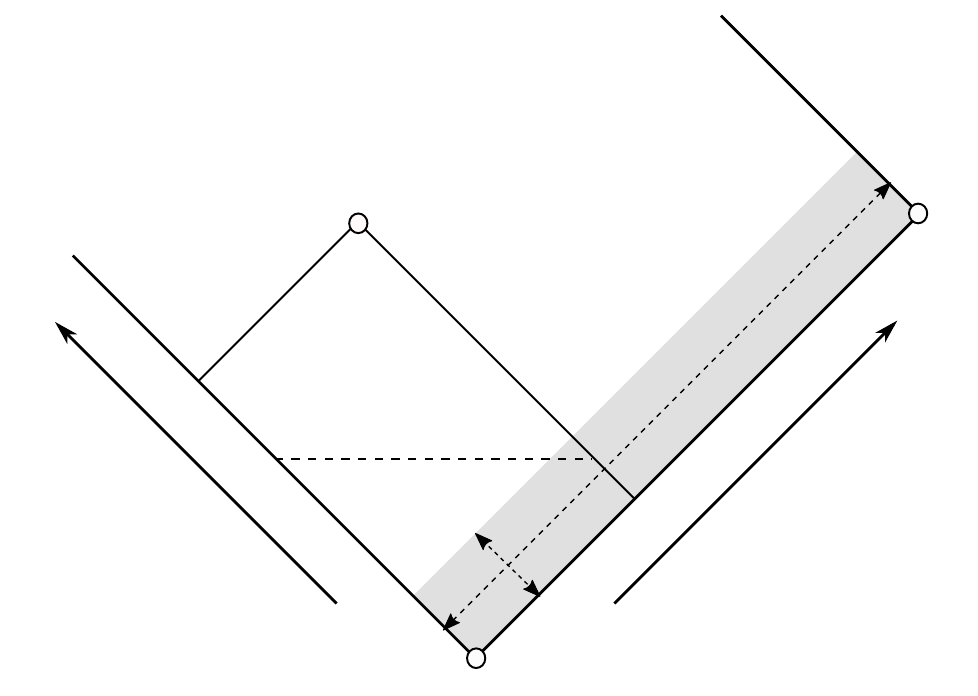}
\put(-85,60){$l^a$, $v$}
\put(-277,60){$n^a$, $u$}
\put(-60,210){$\mathscr{I}^+$}
\put(-315,130){$\mathcal{N}'_\star$}
\put(-50,130){$\mathcal{N}_\star$}
\put(-15,164){$\mathcal{S}'_\star$}
\put(-175,0){$\mathcal{S}_\star=\mathcal{S}_{u_\star,v_\star}$}
\put(-200,160){$\mathcal{S}_{u,v}$}
\put(-200,100){$\mathcal{D}_{u,v}$}
\put(-200,60){$\mathcal{D}^t_{u,v}$}
\put(-142,73){$t$}
\put(-156,41){$\varepsilon$}
\put(-267,125){$\mathcal{N}_u$}
\put(-173,125){$\mathcal{N}'_v$}
\caption{Geometric setup for the analysis of the local extension of
  future null infinity. The construction makes use of a double null
  foliation of the domain of dependence of the initial
  hypersurface~$\mathcal{N}'_\star \cup\mathcal{N}_\star$. The null
  hypersurface~$\mathcal{N}_\star$ terminates at the conformal
  boundary where~$\Xi=0$. Our construction allows us to recover a
  portion of length~$\epsilon$ on~$\mathscr{I}^+$. The coordinates and
  null NP tetrad are adapted to this geometric setting. The analysis
  is focused on the thing grey rectangular domain
  along~$\mathcal{N}_\star$. The conformal Einstein field equation
  allows to treat this problem on an infinite domain in terms of a
  problem in unphysical space on a finite
  domain. \label{Fig:CharacteristicSetup}}
\end{figure}
%%%%%%%%%%%%%%%%%%%%%%%%%%%%%%%%%%%%%%%%%%%%%%%%%%%%%%%%%%%%%

%%%%%%%%%%%%%%%%%%%%%%%%%%%%%%%%%%%%%%%%%%%%%%%%%%%%%%%%%%%%%
\subsection{Basic geometric setting}
%%%%%%%%%%%%%%%%%%%%%%%%%%%%%%%%%%%%%%%%%%%%%%%%%%%%%%%%%%%%%

In our basic setting, the unphysical manifold~$\mathcal{M}$ has a
boundary and two edges.  The boundary consists of three null
hypersurfaces: the outgoing null hypersurface~$\mathcal{N}_{\star}$;
the incoming null hypersurface~$\mathcal{N}_{\star}^{\prime}$ with
non-vacuum intersection~$\mathcal{S}_{\star}\equiv
\mathcal{N}_{\star}\cap\mathcal{N}_{\star}^{\prime}$; future null
infinity $\mathscr{I}^+$ intersecting with~$\mathcal{N}_{\star}$ at
the corner $\mathcal{S}'_{\star}$. For concreteness, we will assume
that~$\mathcal{S}_{\star},\mathcal{S}'_{\star}\approx\mathbb{S}^2$. See
Figure~\ref{Fig:CharacteristicSetup} for further details.

One can introduce coordinates~$\bar{x}=(x^{\mu})$ in a
neighbourhood~$\mathcal{U}$ of~$\mathcal{Z}_{\star}$ with~$x^0=v$
and~$x^1=u$ such that, at least in a neighbourhood
of~$\mathcal{S}_{\star}$ one can write
\begin{align*}
\mathcal{N}_{\star}=\{p\in\mathcal{U}\mid u(p)=0\}, \qquad
\mathcal{N}_{\star}^{\prime}=\{p\in\mathcal{U}\mid v(p)=0\}.
\end{align*}
Given suitable data
on~$(\mathcal{N}_{\star}\cap\mathcal{N}_{\star}^{\prime})\cap\mathcal{U}$
we are interested in making statements about the existence and
uniqueness of solutions to the CEFE on some open set
\begin{align*}
\mathcal{V}\subset\{p\in\mathcal{U}\mid u(p)\geq0, v(p)\geq0\}
\end{align*}
which we identify with a subset of the future domain of
dependence,~$D^{+}(\mathcal{N}_{\star}\cup\mathcal{N}_{\star}^{\prime})$
of~$\mathcal{N}_{\star}\cup\mathcal{N}_{\star}^{\prime}$. Moreover, we
want to show that the existence region can be extended
along~$\mathcal{N}_\star$ to reach the conformal conformal boundary
---this \emph{improved existence domain} corresponds to the grey
rectangle in Figure~\ref{Fig:CharacteristicSetup}.

%%%%%%%%%%%%%%%%%%%%%%%%%%%%%%%%%%%%%%%%%%%%%%%%%%%%%%%%%%%%%
\subsection{Stewart's Gauge}
\label{Section:StewartGauge}
%%%%%%%%%%%%%%%%%%%%%%%%%%%%%%%%%%%%%%%%%%%%%%%%%%%%%%%%%%%%%

Following the discussion of Papers~I and~II, in the following we
assume that the future of~$\mathcal{S}_{\star}$ can be foliated by a
family of null hypersurfaces:~$\mathcal{N}_{u}$ (the \emph{outgoing
  null hypersurfaces}) and~$\mathcal{N}_{v}^{\prime}$ (the
\emph{ingoing null hypersurfaces}). The scalars~$u$ and~$v$ each
satisfy the eikonal equation
\begin{align*}
\bmg^{\sharp}(\mathbf{d}u,\mathbf{d}u)=\bmg^{\sharp}(\mathbf{d}v,\mathbf{d}v)
=0.
\end{align*}
In particular, we assume that~$\mathcal{N}_{0}=\mathcal{N}_{\star}$
and~$\mathcal{N}_{0}^{\prime}=\mathcal{N}_{\star}^{\prime}$. Following
standard usage, we call $u$ a \emph{retarded time} and~$v$ an
\emph{advanced time} and use these two scalar fields~$u$ and~$v$ as
coordinates in a neighbourhood of~$\mathcal{S}_{\star}$. To complete
the coordinate system, consider arbitrary
coordinates~$(x^{\mathcal{A}})$ on~$\mathcal{D}_{\star}$, with the
index $^{\mathcal{A}}$ taking the values $2,\, 3$. These coordinates
are then propagated into~$\mathcal{N}_{\star}$ by requiring them to be
constant along the generators of~$\mathcal{N}_{\star}$. Once
coordinates have been defined on~$\mathcal{N}_{\star}$, one can
propagate them into~$\mathcal{V}$ by requiring them to be constant
along the generators of each~$\mathcal{N}_{v}^{\prime}$. In this
manner one obtains a coordinate system~$(x^{\mu})=(u,\, v,\,
x^{\mathcal{A}})$ in $\mathcal{V}$. Moreover, we
define~$\mathcal{S}_{u,v} \equiv \mathcal{N}_u\cap
\mathcal{N}_{v}^{\prime} \approx \mathbb{S}^2$. Our analysis will be
mostly carried out in \emph{causal diamonds} of the form
\begin{align*}
\mathcal{D}_{u',v'} \equiv \{ 0\leq v \leq v', \; 0\leq u \leq u' \} =
\cup_{0\leq v \leq v',  0\leq u \leq u'} \mathcal{S}_{u,v}.
\end{align*}
By means of the time function $t\equiv u+v$ one can readily define the
\emph{truncated causal diamond}
\begin{align*}
\mathcal{D}^{\tilde{t}}_{u',v'} \equiv \mathcal{D}_{u',v'} \cap \{
t\leq \tilde{t}\}.
\end{align*}

\smallskip
The above coordinate construction is complemented by an NP null
tetrad~$\{\bml,\, \bmn,\, \bmm,\, \bar\bmm\}$ with the vectors~$\bml$
and~$\bmn$ tangent to the generators of the null
hypersurfaces~$\mathcal{N}_{u}$ and~$\mathcal{N}_{v}^{\prime}$
respectively. Following the same discussion of Papers~I and~II we make

\begin{gauge}[\textbf{\em Stewart's choice of the components of the
      frame}]\label{Assumption:Stewarts_Frame} {\em On~$\mathcal{V}$
    we consider a NP frame of the form
\begin{align}
\bml=\bmpartial_v +C^{\mathcal{A}}\bmpartial_{\mathcal{A}},\ \
\bmn=Q\bmpartial_u, \ \
\bmm=P^{\mathcal{A}} \bmpartial_{\mathcal{A}}, \label{framem}
  \end{align}}
where~$C^{\mathcal{A}}=0$ on~$\mathcal{N}_{\star}$, $\bmm$
and~$\bar\bmm$ span the tangent space of
$\mathcal{S}_{u,v}$. On~$\mathcal{N}_{\star}^{\prime}$ one has
that~$\bmn=Q\bmpartial_u$. As the coordinates~$(x^{\mathcal{A}})$ are
constant along the generators of~$\mathcal{N}_{\star}$
and~$\mathcal{N}_{\star}^{\prime}$, it follows that
on~$\mathcal{N}_{\star}^{\prime}$ the coefficient~$Q$ is only a
function of~$u$. Thus, without loss of generality one can
re-parameterise~$u$ so as to set~$Q=1$
on~$\mathcal{N}_{\star}^{\prime}$.
\end{gauge}

Direct inspection of the NP commutators applied to the
coordinates~$(u,\, v,\, x^{\mathcal{A}})$ leads to the following:
\begin{lemma}[\textbf{\em conditions on the connection coefficients}]
\label{Lemma1}
The NP frame of the Gauge Choice~\ref{Assumption:Stewarts_Frame} can
be chosen such that
\begin{subequations}
\begin{align}
& \kappa=\nu=\gamma=0, \label{spinconnection1}\\
& \rho=\bar{\rho},\ \ \mu=\bar{\mu}, \label{spinconnection2}\\
& \pi=\alpha+\bar{\beta}, \label{spinconnection3}
\end{align}
\end{subequations}
on~$\mathcal{V}$ and, furthermore, with
\begin{align*}
\epsilon-\bar{\epsilon}=0\ \ \ \textrm{on}\ \ \
\mathcal{V}\cap\mathcal{N}_{\star}. 
\end{align*}
\end{lemma}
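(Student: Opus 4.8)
The plan is to exploit the fact that the four Newman--Penrose commutators are intrinsic identities and to evaluate them on the three coordinate functions $u$, $v$, $x^{\mathcal{A}}$. From the frame form in the Gauge Choice~\ref{Assumption:Stewarts_Frame} one reads off the action of the directional derivatives $D=\bml^a\nabla_a$, $\Delta=\bmn^a\nabla_a$, $\delta=\bmm^a\nabla_a$, $\bar\delta=\bar\bmm^a\nabla_a$ on the coordinates,
\begin{align*}
& Du=0,\quad Dv=1,\quad Dx^{\mathcal{B}}=C^{\mathcal{B}},\qquad
\Delta u=Q,\quad \Delta v=0,\quad \Delta x^{\mathcal{B}}=0,\\
& \delta u=\delta v=0,\quad \delta x^{\mathcal{B}}=P^{\mathcal{B}},\qquad
\bar\delta u=\bar\delta v=0,\quad \bar\delta x^{\mathcal{B}}=\bar{P}^{\mathcal{B}},
\end{align*}
and, since $Q\neq0$ throughout $\mathcal{V}$, each commutator evaluated on a coordinate produces an algebraic relation among the spin coefficients.

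First I would feed the coordinates into the commutators one at a time and collect the purely algebraic output, using the conventions recorded in Paper~II. The commutator $[\Delta,D]$ applied to $v$ isolates the coefficient of $D$ and gives $\gamma+\bar\gamma=0$. The commutator $[\delta,D]$ applied to $u$ returns $\kappa\,Q=0$, hence $\kappa=0$, while the same commutator applied to $v$ yields $\bar\alpha+\beta-\bar\pi=0$, whose conjugate is precisely $\pi=\alpha+\bar\beta$. The commutator $[\delta,\Delta]$ applied to $v$ gives $\bar\nu=0$, i.e. $\nu=0$. Finally $[\bar\delta,\delta]$ applied to $u$ and to $v$ produces $(\bar\rho-\rho)Q=0$ and $\bar\mu-\mu=0$, giving $\rho=\bar\rho$ and $\mu=\bar\mu$. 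This already establishes \eqref{spinconnection2}, \eqref{spinconnection3}, the vanishing of $\kappa$ and $\nu$, and the real part of $\gamma$.

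What the commutators acting on coordinates do \emph{not} deliver cleanly is the imaginary part of $\gamma$ and the condition $\epsilon-\bar\epsilon=0$; when those combinations appear, e.g. in $[\delta,D]$ or $[\delta,\Delta]$ evaluated on $x^{\mathcal{B}}$, they are contaminated by the frame coefficients $C^{\mathcal{B}}$, $P^{\mathcal{B}}$ and do not decouple. Here I would use the residual tetrad freedom: the coordinate gauge already fixes the boost, since $\bml=\bmpartial_v+C^{\mathcal{A}}\bmpartial_{\mathcal{A}}$ normalises $l^v=1$, leaving only the spin $\bmm\mapsto e^{i\psi}\bmm$ with $\psi$ real. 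Under this spin the imaginary parts of $\epsilon$ and $\gamma$ shift by terms proportional to $D\psi$ and $\Delta\psi$ respectively, while $\gamma+\bar\gamma$ is untouched. I would therefore fix $\psi$ on $\mathcal{N}_\star$ by integrating a $D$-transport equation along the generators, on which $C^{\mathcal{A}}=0$, so as to enforce $\epsilon-\bar\epsilon=0$ there, and then propagate $\psi$ off $\mathcal{N}_\star$ along the $\bmn$-direction by a $\Delta$-transport equation enforcing $\gamma-\bar\gamma=0$ on all of $\mathcal{V}$. Combined with $\gamma+\bar\gamma=0$ this yields $\gamma=0$ and completes \eqref{spinconnection1}.

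The routine part is the sign and coefficient bookkeeping in the commutators. The step that needs genuine care is the last one: verifying that a single real function $\psi$ can simultaneously realise $\epsilon-\bar\epsilon=0$ on $\mathcal{N}_\star$ and $\gamma=0$ in the bulk. This is a compatibility question for two transport equations posed in transverse directions with $\mathcal{N}_\star$ as the initial surface, and the only genuine freedom left over---the value of $\psi$ on the corner $\mathcal{S}_\star$---is exactly the expected global spin rotation of the frame on the initial two-sphere. I expect this consistency check, rather than the algebra, to be the main obstacle.
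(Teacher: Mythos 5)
Your proposal is correct and is essentially the paper's own argument: Lemma~\ref{Lemma1} is obtained precisely by evaluating the NP commutators on the coordinate functions $(u,v,x^{\mathcal{A}})$ to read off $\kappa=\nu=0$, $\gamma+\bar\gamma=0$, $\rho=\bar\rho$, $\mu=\bar\mu$ and $\pi=\alpha+\bar\beta$, and then exhausting the residual spin freedom $\bmm\mapsto e^{i\psi}\bmm$ (the only tetrad freedom left by the coordinate-adapted frame) via transport equations for $\psi$, exactly as in the analogous lemmas of Papers~I and~II. The compatibility worry you raise at the end is actually vacuous: since $\epsilon-\bar\epsilon=0$ is demanded only on $\mathcal{N}_\star$, the $D$-transport fixes $\psi$ on $\mathcal{N}_\star$ (up to its value on $\mathcal{S}_\star$) and the $\Delta$-transport then merely propagates that data off $\mathcal{N}_\star$ to enforce $\gamma-\bar\gamma=0$ in the bulk, so the two equations are solved hierarchically on disjoint domains and never compete---one only needs the routine check, which your scheme implicitly passes, that the spin rotation leaves the already-established algebraic relations invariant.
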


\begin{remark}
{\em Additional commutator relations can be used to obtain equations
  for the frame coefficients~$Q$, $P^{\mathcal{A}}$
  and~$C^{\mathcal{A}}$ ---see
  equations~\eqref{PaperII-framecoefficient1}-\eqref{PaperII-framecoefficient6}
  in Paper II.}
\end{remark}

In addition to the \emph{coordinate} and \emph{frame gauge freedom} we
also need to fix the \emph{conformal gauge freedom}. This is done in
the following lemma whose proof follows the same scheme as
Lemma~\ref{PaperII-Lemma:ConformalGauge} in Paper~II:

\begin{lemma}[\textbf{\em conformal gauge conditions for
    characteristic problem}] \label{Lemma2}
Let~$(\tilde{\mathcal{M}},\tilde{\bmg})$ denote a vacuum
asymptotically simple spacetime and let~$(\mathcal{M}, \bmg, \Xi)$
with~$\bmg=\Xi^2\tilde{\bmg}$ a conformal extension. Given the NP
frame of the Gauge Choice~\ref{Assumption:Stewarts_Frame}, the
conformal factor~$\Xi$ can be chosen so that
\begin{align*}
\mathit{R}[\bmg]=R(x), \qquad \textrm{in a neighbourhood} \quad
\mathcal{V}\quad  \textrm{of} \quad \mathcal{S}_{\star} \quad \textrm{on}
\quad J^+(\mathcal{S}_{\star})
\end{align*}
where~$R(x)$ is an arbitrary function of the coordinates. Moreover,
one has the additional gauge conditions
\begin{align*}
& \Sigma_2=1, \ \ \ \ \textrm{on}\ \ \mathcal{S}_{\star}, \\
& \Phi_{22}=0 \ \ \ \ \textrm{on}\ \ \mathcal{N}'_{\star}, \\
& \Phi_{00}=0 \ \ \ \ \textrm{on}\ \ \mathcal{N}_{\star}.
\end{align*}
\end{lemma}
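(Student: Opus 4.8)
The goal is to fix the conformal gauge freedom so that the Ricci scalar $R[\bmg]$ of the unphysical metric equals a prescribed function $R(x)$, while simultaneously arranging the normalisation conditions on $\Sigma_2$, $\Phi_{22}$ and $\Phi_{00}$. The plan is to exploit the conformal covariance of the setup: since the physical metric $\tilde{\bmg}$ is fixed, any admissible unphysical metric has the form $\bmg = \Xi^2 \tilde{\bmg}$, and the residual conformal freedom is the choice of the positive function $\Xi$ (equivalently, a rescaling $\Xi \mapsto \theta \Xi$ by a positive $\theta$ that preserves the boundary-defining property). First I would recall the standard transformation law for the Ricci scalar under a conformal rescaling $\bmg \mapsto \bmg' = \theta^2 \bmg$ in four dimensions, namely an equation of the schematic form
\begin{align*}
R[\bmg'] = \theta^{-2}\big( R[\bmg] - 6\,\theta^{-1}\nabla^a\nabla_a \theta \big),
\end{align*}
so that demanding $R[\bmg'] = R(x)$ becomes a \emph{linear} second-order wave-type equation for $\theta$ with source determined by $R(x)$ and the background $R[\bmg]$.

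The key structural point is that, in the double-null Stewart gauge, the operator $\nabla^a \nabla_a$ acting on a scalar has principal part governed by the $u,v$ derivatives through $\bml$ and $\bmn$. Concretely, using the frame in equation~\eqref{framem}, the leading term takes the form of a mixed derivative $\bml(\bmn(\theta))$ plus lower-order transport terms, which makes the equation for $\theta$ a characteristic (Goursat-type) problem. The plan is therefore to prescribe $\theta$ freely on the two initial null hypersurfaces $\mathcal{N}_\star$ and $\mathcal{N}'_\star$ subject to matching at $\mathcal{S}_\star$, and then to propagate $\theta$ into $\mathcal{V}$ by solving this linear characteristic equation. The freedom in the initial data for $\theta$ along each null hypersurface is exactly what I would use to secure the three subsidiary conditions: I would fix the value of $\theta$ and its transverse derivative along $\mathcal{S}_\star$ to arrange $\Sigma_2 = 1$ there, then choose the free function along $\mathcal{N}'_\star$ to kill $\Phi_{22}$ and the free function along $\mathcal{N}_\star$ to kill $\Phi_{00}$, since under conformal rescaling the relevant components of the Schouten/Ricci data transform inhomogeneously with a term involving derivatives of $\theta$ along the respective generators.

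The argument then reduces to checking that each of these normalisations can indeed be achieved by a suitable first-order (transport) ODE along the corresponding generator, which is where the bulk of the verification lies. For $\Sigma_2 = 1$ on $\mathcal{S}_\star$ one uses that $\Sigma_a = \nabla_a \Xi$ rescales so that its frame component in the $\bmm$ direction can be normalised pointwise on the cut. For $\Phi_{22}=0$ on $\mathcal{N}'_\star$ and $\Phi_{00}=0$ on $\mathcal{N}_\star$ one writes out the transformation of the trace-free Ricci (equivalently Schouten) components under $\bmg \mapsto \theta^2\bmg$; the inhomogeneous pieces are second derivatives of $\log\theta$ along $\bmn$ and $\bml$ respectively, so that setting the transformed component to zero yields a linear second-order ODE along each generator, solvable by prescribing $\theta$ and one derivative at $\mathcal{S}_\star$. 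Finally I would invoke local existence and uniqueness for the linear characteristic problem for $\theta$ to conclude that $\theta > 0$ (hence $\bmg' = \theta^2\bmg$ remains a legitimate conformal representative with $\Xi' = \theta\Xi$ still a boundary-defining function) in a neighbourhood $\mathcal{V}$ of $\mathcal{S}_\star$.

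The main obstacle I anticipate is twofold: first, verifying \emph{consistency} of the four simultaneous demands — one must check that fixing the free data of $\theta$ on $\mathcal{N}_\star$ and $\mathcal{N}'_\star$ to achieve $\Phi_{00}=0$ and $\Phi_{22}=0$ does not conflict with the corner condition $\Sigma_2=1$ at $\mathcal{S}_\star$, which requires tracking how many derivatives of $\theta$ each condition constrains and confirming they are independent. Second, one must ensure that the prescribed $R(x)$ together with the chosen null data really determines $\theta$ globally on $\mathcal{V}$ and keeps it positive; since the defining equation for $\theta$ is linear of wave type in the conformal factor, shrinking $\mathcal{V}$ if necessary guarantees positivity by continuity from the initial value $\theta>0$ on the initial hypersurfaces. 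Because the proof mirrors Lemma~\ref{PaperII-Lemma:ConformalGauge}, I would rely on that result's structure and only indicate the modifications forced by the fact that here \emph{both} initial hypersurfaces lie in the physical region (away from $\mathscr{I}^+$), so that no boundary term at the conformal boundary enters the characteristic data for $\theta$.
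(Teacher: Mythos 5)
Your proposal follows essentially the same scheme as the paper, whose proof is deferred to the conformal-gauge lemma of Paper~II: use the residual rescaling $\Xi\mapsto\theta\Xi$, fix $\theta$ along the generators of $\mathcal{N}_\star$ and $\mathcal{N}'_\star$ by ODEs enforcing $\Phi_{00}=0$ and $\Phi_{22}=0$ with corner data at $\mathcal{S}_\star$ arranging $\Sigma_2=1$, and then propagate $\theta$ into $\mathcal{V}$ by solving the Ricci-scalar equation as a characteristic (Goursat) problem, shrinking $\mathcal{V}$ to keep $\theta>0$. Two small corrections: the equation enforcing $R[\theta^2\bmg]=R(x)$ is semilinear, $6\,\square_{\bmg}\theta = R[\bmg]\,\theta - R(x)\,\theta^3$, not linear (local characteristic existence still applies, so the argument survives), and $\Sigma_2$ is the component of $\nabla_a\Xi$ along $\bmn$ (i.e.\ $\Delta\Xi$, as used in the proof of Lemma~\ref{Lemma:StructureConformalBoundary}), not the $\bmm$-component.
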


%%%%%%%%%%%%%%%%%%%%%%%%%%%%%%%%%%%%%%%%%%%%%%%%%%%%%%%%%%%%%%%%%%%%%%%%%%%%%%%
\section{The formulation of the characteristic initial value problem}
%%%%%%%%%%%%%%%%%%%%%%%%%%%%%%%%%%%%%%%%%%%%%%%%%%%%%%%%%%%%%%%%%%%%%%%%%%%%%%%

This section provides a brief discussion of the basic set up and local
existence theory of the CIVP for the conformal Einstein field
equations with data on the null hypersurfaces~$\mathcal{N}_{\star}$
and~$\mathcal{N}_{\star}^{\prime}$ using Rendall's reduction
strategy~\cite{Ren90} ---see also Section~12.5 of~\cite{CFEBook}. The
analysis is completely analogous to the one carried out in Paper~I in
which the initial value problem for the vacuum Einstein field
equations was considered ---note, by contrast, the conceptual
difference with Paper~II in which an asymptotic characteristic problem
was considered.

%%%%%%%%%%%%%%%%%%%%%%%%%%%%%%%%%%%%%%%%%%%%%%%%%%%%%%%%%%%%%
\subsection{Specifiable free data}
%%%%%%%%%%%%%%%%%%%%%%%%%%%%%%%%%%%%%%%%%%%%%%%%%%%%%%%%%%%%%

In order to obtain a solution in the
domain~$J^+(\mathcal{S}_{\star})$, we need to provide initial data for
the evolution equations
on~$\mathcal{N}_{\star}\cup\mathcal{N}_{\star}^{\prime}$. In
particular, we need to know the value of the derivatives of conformal
factor~$\{\Sigma_1,\ \Sigma_2,\ \Sigma_3,\ \Sigma_4\}$, the components
of frame~$\{C^{\mathcal{A}},\ P^{\mathcal{A}}, Q \}$, the spin
connection
coefficients~$\{\epsilon,\ \pi,\ \beta,\ \mu,\ \alpha,\ \lambda,
\ \tau,\ \sigma,\ \rho\}$, the rescaled Weyl
tensor~$\{\phi_0,\ \phi_1,\ \phi_2,\ \phi_3,\ \phi_4\}$ and the Ricci
tensor~$\{\Phi_{00},\ \Phi_{01},\ \Phi_{11},\ \Phi_{02},\ \Phi_{12},\ \Phi_{22}\}$
on the initial hypersurfaces. However, as a consequence of the
constraints implied by the CEFE, this data cannot be freely
specified. As in the case of the discussion in Papers~I and~II, The
hierarchical structure of the CEFE allows to identify the basic
reduced initial data set~$r_\star$ from which the full initial data
on~$\mathcal{N}_{\star}\cup\mathcal{N}_{\star}^{\prime}$ for the
conformal Einstein field equations can be computed. The following
lemma shows us the freely specifiable data for our characteristic
problem.

\begin{lemma}[\textbf{\em freely specifiable data for the
    characteristic problem}]\label{Lemma3} Assume that the Gauge
  Choice~\ref{Assumption:Stewarts_Frame} and the gauge conditions
  implied by Lemmas~\ref{Lemma1} and~\ref{Lemma2} are satisfied in a
  neighbourhood~$\mathcal{V}$ of~$\mathcal{S}_{\star}$. Initial data
  for the conformal Einstein field equations
  on~$\mathcal{N}_{\star}\cup\mathcal{N}_{\star}^{\prime}$ can be
  computed from the reduced data set~$\mathbf{r}_\star$ consisting of:
\begin{align*}
  &\phi_0, \;  \Xi, \quad \mbox{on}\quad \mathcal{N}_{\star},\nonumber\\
  &\phi_4\ \ on\ \ \mathcal{N}_{\star}^{\prime}, \nonumber\\
  &\lambda,\ \ \phi_2+\bar{\phi}_2,\ \ \Phi_{20},\ \ \phi_3,
  \ \ P^{\mathcal{A}},\ \ on\ \ \mathcal{S}_{\star}. \nonumber
\end{align*}
\end{lemma}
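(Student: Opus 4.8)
The plan is to exploit the \emph{hierarchical structure} of the CEFE when restricted to the two initial null hypersurfaces, exactly as in Papers~I and~II. The first step is to split the NP form of the CEFE \eqref{CFE1}--\eqref{CFE6} (recorded in the Appendix of Paper~II) into \emph{interior equations}, which involve only derivatives tangent to a given hypersurface, and \emph{transverse evolution equations}, which carry data off the hypersurface and are discarded at this stage. On~$\mathcal{N}_{\star}$ (where $u=0$ and $\bml=\bmpartial_v$ is tangent) the interior equations reduce to transport equations in the $D=\bmpartial_v$ direction supplemented by the angular operators~$\delta,\bar\delta$; on~$\mathcal{N}_{\star}^{\prime}$ (where $v=0$ and $\bmn=Q\bmpartial_u$ is tangent) they reduce to transport equations in the $\Delta$ direction. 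The gauge conditions of Lemmas~\ref{Lemma1} and~\ref{Lemma2} ($\kappa=\nu=\gamma=0$, $\rho=\bar\rho$, $\mu=\bar\mu$, $\pi=\alpha+\bar\beta$, together with $\Phi_{00}=0$ on~$\mathcal{N}_{\star}$ and $\Phi_{22}=0$ on~$\mathcal{N}_{\star}^{\prime}$) are then used to eliminate redundant unknowns and to arrange that the source of each transport equation depends only on fields lower in the hierarchy.

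Second, I would show that the interior equations on~$\mathcal{N}_{\star}$ form a triangular system integrable in a definite order, with free data~$\phi_0,\Xi$ and initial values on~$\mathcal{S}_{\star}$. Concretely: the prescribed profile of~$\Xi$ fixes, by differentiation along the generators and by the algebraic content of \eqref{CFE1}, \eqref{CFE2}, \eqref{CFE5}, the entire conformal-factor sector~$\{\Sigma_a,s\}$ and the intrinsic components of~$L_{ab}$ along~$\mathcal{N}_{\star}$; the Raychaudhuri and $D\sigma$ equations, whose source is built from~$\Xi\phi_0$, then determine the spin coefficients~$\rho$ and~$\sigma$; the remaining Ricci identities yield~$\tau,\alpha,\beta,\pi,\epsilon$; the frame equations (cf.\ the Remark following Lemma~\ref{Lemma1}) determine~$P^{\mathcal{A}}$ and~$C^{\mathcal{A}}$; and the $D$-component Bianchi equations implied by \eqref{CFE3}--\eqref{CFE4} propagate~$\phi_1,\phi_2,\phi_3,\phi_4$. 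An entirely analogous integration on~$\mathcal{N}_{\star}^{\prime}$, now driven by~$\phi_4$ and transport in~$\Delta$, determines the full field content there. The point to be verified is that at each step the right-hand side contains only already-computed quantities, so that the hierarchy closes; the presence of the additional conformal fields~$\Xi,\Sigma_a,s,L_{ab}$ makes this bookkeeping more delicate than in the vacuum problem of Paper~I, and confirming closure is the main obstacle.

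Finally, I would treat the corner~$\mathcal{S}_{\star}$, where the reduced seed~$\lambda,\ \phi_2+\bar\phi_2,\ \Phi_{20},\ \phi_3,\ P^{\mathcal{A}}$ must supply all the integration constants required by the two transport problems. Here~$P^{\mathcal{A}}$ fixes the intrinsic $2$-metric of~$\mathcal{S}_{\star}\approx\mathbb{S}^2$, hence its Gauss curvature and the intrinsic connection combination~$\alpha-\bar\beta$; the normalisations~$\Sigma_2=1$ and~$\epsilon-\bar\epsilon=0$, together with the Gauss (``corner'') constraint relating the $2$-curvature to~$\rho\mu-\sigma\lambda$ and to the Weyl/Schouten components, then fix the remaining spin coefficients and the undetermined, in particular imaginary, part of~$\phi_2$ on~$\mathcal{S}_{\star}$. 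I would check that this reduced system is solvable --- possibly through the inversion of an elliptic operator on~$\mathbb{S}^2$ --- and that its solution furnishes consistent initial values for both the~$\mathcal{N}_{\star}$ and~$\mathcal{N}_{\star}^{\prime}$ integrations. Since, away from~$\mathscr{I}^+$, the NP equations and the gauge are identical in form to those of Paper~II, the solvability and uniqueness of this reconstruction follow by the same argument, completing the identification of~$\mathbf{r}_{\star}$ as freely specifiable.
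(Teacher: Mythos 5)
Your proposal is correct and follows essentially the same route as the paper: the paper's proof of Lemma~\ref{Lemma3} consists precisely of the observation that it is completely analogous to the corresponding lemmas of Papers~I and~II, whose content is exactly your scheme of splitting the NP form of the CEFE into interior and transverse equations, integrating the interior transport equations hierarchically along the generators of~$\mathcal{N}_{\star}$ and~$\mathcal{N}_{\star}^{\prime}$, and letting the corner data on~$\mathcal{S}_{\star}$ (with the Gauss constraint fixing the imaginary part of~$\phi_2$) supply the integration constants. One small correction: no inversion of an elliptic operator on~$\mathbb{S}^2$ is required --- all corner determinations are algebraic or reduce to ODEs along the generators --- so the hedge in your final paragraph can simply be dropped.
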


The proof of this result is completely analogous to that of
Lemma~\eqref{PaperII-Lemma3} in Paper~II ---see
also~\eqref{PaperI-Lemma:FreeDataCIVP} in Paper~I.

\smallskip
In the problem under consideration we require
that~$\mathcal{N}_{\star}$ has a finite range~$v\in[0,v_\bullet]$ and
extends to the conformal boundary~$\mathscr{I}^+$ ---i.e. future null
infinity. This idea can be encoded in the following requirements
on~$\Xi$:
\begin{align*}
&\Xi>0 \qquad \mbox{on} \quad \mathcal{N}_{\star}/\mathscr{I}^+ \\
&\Xi=0 \qquad \mbox{on} \quad \mathcal{S}_{0, v_\bullet}.
\end{align*}
In addition, it is also necessary to ensure that one remains
on~$\mathscr{I}^+$ if we move away from~$\mathcal{S}_{0, v_\bullet}$ along
the direction given by~$\bmn$. This is ensured by the following Lemma.

\begin{lemma}[\textbf{\em conditions for $\Xi$ on the conformal boundary
    $\mathscr{I}^+$}]\label{Lemma:StructureConformalBoundary}
Under the same assumptions in Lemma~\ref{Lemma3}, and with a conformal
factor satisfying
\begin{align*}
\Xi=0 \qquad \mbox{on} \quad  \ \mathcal{S}_{0,v_\bullet},
\end{align*}
we have 
\begin{align*}
\Xi=0, \quad \mathbf{d}\Xi\neq 0, \qquad
\mbox{on} \quad  \mathcal{N}_{v_\bullet}^{\prime}.
\end{align*}
\end{lemma}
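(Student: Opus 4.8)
The plan is to read the statement as an assertion propagated along the ingoing null generators of $\mathcal{N}^{\prime}_{v_\bullet}$. By the Gauge Choice~\ref{Assumption:Stewarts_Frame} the vector $\bmn=Q\bmpartial_u$ is tangent to these generators, and by Lemma~\ref{Lemma1} (which gives $\gamma=\nu=0$) it is an affine geodesic, $n^a\nabla_a n^b=0$. First I would contract the Hessian equation~\eqref{CFE1} with $n^an^b$; since $\bmn$ is null, $g_{ab}n^an^b=0$ kills the Friedrich-scalar term and leaves
\begin{align*}
n^a n^b\nabla_a\nabla_b\Xi=-\Xi\,L_{ab}n^an^b.
\end{align*}
Using the geodesic property to commute $n^a\nabla_a$ past the first derivative, this becomes, with $\Sigma_2=n^a\nabla_a\Xi$, the closed linear homogeneous first-order system
\begin{align*}
n^a\nabla_a\Xi=\Sigma_2,\qquad n^a\nabla_a\Sigma_2=-\Xi\,L_{ab}n^an^b
\end{align*}
for the pair $(\Xi,\Sigma_2)$ along each generator, with coefficient $L_{ab}n^an^b$ continuous (regular by the conformal framework).

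To conclude $\Xi\equiv0$ on $\mathcal{N}^{\prime}_{v_\bullet}$ I need this system to have vanishing data at $u=0$, i.e. on $\mathcal{S}_{0,v_\bullet}$. The condition $\Xi=0$ there is the hypothesis. For $\Sigma_2$ I would use the constraint~\eqref{CFE5}, $\nabla_a\Xi\nabla^a\Xi=2\Xi s$, whose NP form is $\Sigma_1\Sigma_2-|\Sigma_3|^2=\Xi s$. Since $\Xi$ vanishes on the entire section $\mathcal{S}_{0,v_\bullet}$ its angular derivatives vanish there, $\Sigma_3=\Sigma_4=0$; and since $\Xi$ is prescribed as a boundary defining function along $\mathcal{N}_\star$ the transverse derivative $\Sigma_1=\bml(\Xi)=\bmpartial_v\Xi$ is nonzero at the corner. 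Evaluating~\eqref{CFE5} at $\mathcal{S}_{0,v_\bullet}$ then gives $\Sigma_1\Sigma_2=0$ and hence $\Sigma_2=0$. With $\Xi$ and $\Sigma_2$ both vanishing at $u=0$, uniqueness for the linear homogeneous transport system forces $\Xi\equiv0$ (and $\Sigma_2\equiv0$) on $\mathcal{N}^{\prime}_{v_\bullet}$.

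For the non-vanishing of $\mathbf{d}\Xi$, note that with $\Xi\equiv0$ on $\mathcal{N}^{\prime}_{v_\bullet}$ all derivatives tangent to this hypersurface vanish ($\Sigma_2=\Sigma_3=\Sigma_4=0$), so $\mathbf{d}\Xi$ is conormal to $\mathcal{N}^{\prime}_{v_\bullet}$ and $\mathbf{d}\Xi\neq0$ reduces to $\Sigma_1=\bml(\Xi)\neq0$. Differentiating $\Sigma_1$ and $s$ along $\bmn$ via~\eqref{CFE1} and~\eqref{CFE2} and using $\Xi=0$ yields a further linear homogeneous system $n^a\nabla_a\Sigma_1=s$, $n^a\nabla_a s=-\Sigma_1 L_{ab}n^an^b$; since $\Sigma_1\neq0$ at $\mathcal{S}_{0,v_\bullet}$, continuity of its solution along the generators — restricted to the short portion of $\mathscr{I}^+$ being recovered — keeps $\Sigma_1\neq0$, so $\mathbf{d}\Xi\neq0$ on $\mathcal{N}^{\prime}_{v_\bullet}$. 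I expect the delicate point to be the identification $\Sigma_2=0$ at the corner: it hinges on combining~\eqref{CFE5} with the nondegeneracy of $\Xi$ as a boundary defining function along $\mathcal{N}_\star$ (so that it is $\Sigma_1$, not $\Sigma_2$, that carries the non-trivial gradient) and on the vanishing of the angular derivatives; everything downstream is ODE uniqueness together with a continuity argument.
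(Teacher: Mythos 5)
Your proposal is correct and follows the same basic route as the paper: contract equation~\eqref{CFE1} with $n^an^b$ (the $sg_{ab}$ term dropping out since $\bmn$ is null and the geodesic property $\gamma=\nu=0$ allowing the commutation), obtaining the linear homogeneous transport system $\Delta\Xi=\Sigma_2$, $\Delta\Sigma_2=-\Xi\Phi_{22}$ along the generators of $\mathcal{N}'_{v_\bullet}$, and conclude $\Xi\equiv0$ by ODE uniqueness. In fact your write-up is more complete than the paper's own two-line proof on two counts. First, the paper combines the system into $\Delta^2\Xi=-\Xi\Phi_{22}$ and invokes uniqueness with only the datum $\Xi|_{\mathcal{S}_{0,v_\bullet}}=0$; uniqueness for a second-order ODE of course also needs $\Delta\Xi=\Sigma_2=0$ at the corner, which the paper leaves implicit. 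You supply exactly the missing step: on the section $\mathcal{S}_{0,v_\bullet}$ the angular derivatives $\Sigma_3,\Sigma_4$ vanish, so the constraint~\eqref{CFE5} reduces to $\Sigma_1\Sigma_2=\Xi s=0$, and the nondegeneracy of $\Xi$ along $\mathcal{N}_\star$ gives $\Sigma_1\neq0$, hence $\Sigma_2=0$. (Note that this nondegeneracy, $\Sigma_1=\bmpartial_v\Xi\neq0$ at the corner, is an implicit assumption on the free datum $\Xi$ of Lemma~\ref{Lemma3} --- the stated requirements $\Xi>0$ on $\mathcal{N}_\star\setminus\mathscr{I}^+$ and $\Xi=0$ at $v=v_\bullet$ alone only give $\Sigma_1\leq0$; you flag this correctly, and without it the conclusion $\mathbf{d}\Xi\neq0$ could fail.) Second, the paper's proof is entirely silent on the claim $\mathbf{d}\Xi\neq0$; your argument --- tangential derivatives vanish once $\Xi\equiv0$ on $\mathcal{N}'_{v_\bullet}$, so everything reduces to $\Sigma_1\neq0$, propagated via the closed linear system $\Delta\Sigma_1=s$, $\Delta s=-\Sigma_1\Phi_{22}$ --- is sound, with the one caveat you already note: since $\Sigma_1$ can in principle develop a simple zero along a generator (the homogeneous system only protects the pair $(\Sigma_1,s)$ from vanishing simultaneously), the nonvanishing of $\Sigma_1$ is genuinely a smallness statement in the short direction $u\in[0,\varepsilon]$, which is consistent with the $\varepsilon$-thin causal diamond on which the paper works.
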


\begin{proof}
From the definition of~$\Sigma_a$ and the conformal
equation~\eqref{CFE1} it follows that in our gauge one has
\begin{align*}
&\Delta\Xi=\Sigma_2, \\
&\Delta\Sigma_2=-\Xi\Phi_{22},
\end{align*}
along~$\mathcal{N}_{v_{\star}}^{\prime}$. Combining these equations we
find that
\begin{align*}
\Delta^2\Xi=-\Xi\Phi_{22},
\end{align*}
so that~$\Xi=0$ is a solution such
that~$\Xi|_{\mathcal{S}_{0,v_{\star}}}$. The theory of ordinary
differential equation shows that this is the unique solution.
\end{proof}

%%%%%%%%%%%%%%%%%%%%%%%%%%%%%%%%%%%%%%%%%%%%%%%%%%%%%%%%%%%%%
\subsection{The reduced conformal field equations}
%%%%%%%%%%%%%%%%%%%%%%%%%%%%%%%%%%%%%%%%%%%%%%%%%%%%%%%%%%%%%

In Paper~II it has been discussed how the CEFE expressed in Stewart's
gauge imply a symmetric hyperbolic evolution system. More precisely,
letting
\begin{align*}
& \bm{\Sigma}^t\equiv (\Xi, \Sigma_1,\ \Sigma_2,\ \Sigma_3,\ \Sigma_4,\
   s), \\
& \bm{e}^t\equiv (C^{\mathcal{A}},\ P^{\mathcal{A}},\ Q), \\
& \bm{\Gamma}^t\equiv (\epsilon,\ \pi,\ \beta,\ \mu,\ \alpha,\ \lambda,\
\tau,\ \sigma,\ \rho), \\
& \bm{\phi}^t\equiv (\phi_0,\ \phi_1,\ \phi_2,\ \phi_3,\ \phi_4), \\
&  \bm{\Phi}^t\equiv (\Phi_{00},\ \Phi_{01},\ \Phi_{11},\ \Phi_{02},
\Phi_{12},\ \Phi_{22}),
\end{align*}
it can be shown that 
\begin{align}
\bm{\mathcal{D}}^\mu(x,\bmu)\p_\mu\bmu=\bmB(x,\bmu)\bmu
\label{ReducedCEFE}
\end{align}
with 
\begin{align*}
\bmu=(\bm{e}^t,\bm{\Sigma}^t,\bm{\Gamma}^t,\bm{\Phi}^t,\bm{\phi}^t)^t
\end{align*}
is a symmetric hyperbolic system with respect to the direction given
by
\begin{align*}
\tau^a = l^a + n^a.
\end{align*}
In particular,~$\bm{\mathcal{D}}^\mu(x,\bmu)$ are Hermitian matrices
and~$\bmB(x,\bmu)$ are smooth matrix-valued functions of their
arguments whose explicit form will not be required in the subsequent
discussion in this section. We call the evolution
system~\eqref{ReducedCEFE} the \emph{reduced conformal Einstein field
  equations}.

\begin{remark}
{\em The propagation of the constraint equations implied by the CEFE
  on the initial
  hypersurface~$\mathcal{N}_\star\cup\mathcal{N}'_\star$ can be
  addressed along the same lines of the analysis in Section 12.5
  of~\cite{CFEBook}. It follows from the latter that a solution of the
  reduced conformal field equations on a neighbourhood~$\mathcal{V}$
  of~$\mathcal{S}_{\star}$ on~$J^+(\mathcal{S}_{\star})$ that
  coincides with initial data
  on~$\mathcal{N}_\star'\cup\mathcal{N}_{\star}$ satisfying the
  conformal equations is, in fact, a solution to the conformal
  Einstein field equations on~$\mathcal{V}$.}
\end{remark}

Rendall's approach to the existence and uniqueness of solutions of
CIVP can be obtained via an auxiliary Cauchy initial value problem on
a spacelike hypersurface~$S_\star$ denoted
by~$\{p\in\mathbb{R}\times\mathbb{R}\times\mathbb{S}^2\mid
v(p)+u(p)=0\}$. The formulation of this problem crucially depends on
Whitney's extension theorem which requires being able to evaluate all
derivatives (interior and transverse) of initial data
on~$\mathcal{N}_\star'\cup\mathcal{N}_{\star}$. A key property of the
NP equations in Stewart's gauge is that any arbitrary formal
derivatives of the unknown
functions~$\{\bm{\Sigma},\ \bm{e},\ \bm{\Gamma},\ \bm{\Phi},\ \bm{\phi}\}$
on~$\mathcal{N}_\star'\cup\mathcal{N}_{\star}$ can be computed from
the prescribed initial data~$\bm{r}_\star$ for the reduced conformal
field equations on~$\mathcal{N}_\star'\cup\mathcal{N}_{\star}$. This
observation allows to make use of Whitney's extension theorem. More
details can be found in Papers~I and~II.

Combining the previous analysis and applying the theory of CIVP for
the symmetric hyperbolic systems of Section 12.5 of~\cite{CFEBook},
one obtains the following existence result:

\begin{theorem}[\textbf{\em basic local existence and uniqueness to the standard
    asymptotic characteristic problem}] Given an smooth reduced
  initial data set~$\bm{r}_\star$
  on~$\mathcal{N}_\star'\cup\mathcal{N}_{\star}$, there exists a
  unique smooth solution to the CEFE in a neighbourhood~$\mathcal{V}$
  of~$\mathcal{Z}_{\star}$ on~$J^+(\mathcal{S}_{\star})$ which implies
  the prescribed initial data
  on~$\mathcal{N}_\star'\cup\mathcal{N}_{\star}$. Moreover, this
  solution to the conformal Einstein field equations implied, in turn,
  a solution to the vacuum Einstein field equations in a neighbourhood
  of null infinity.
\end{theorem}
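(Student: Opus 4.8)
The plan is to follow Rendall's reduction strategy and recast the characteristic problem, with data on $\mathcal{N}_\star\cup\mathcal{N}_\star'$, as an auxiliary \emph{standard} Cauchy problem on the spacelike hypersurface $S_\star=\{p\mid u(p)+v(p)=0\}$, to which the local existence and uniqueness theory for symmetric hyperbolic systems can be applied. As a first step I would reconstruct the full initial data for $\bmu$ on $\mathcal{N}_\star\cup\mathcal{N}_\star'$ from the reduced set $\bm{r}_\star$. By Lemma~\ref{Lemma3} the constraints induced by the CEFE on each null hypersurface organise themselves into a hierarchy of transport equations along the generators; integrating these sequentially from $\bm{r}_\star$ determines all the components of $\bm{\Sigma}$, $\bm{e}$, $\bm{\Gamma}$, $\bm{\Phi}$ and $\bm{\phi}$ on the initial configuration, with $\Xi$ fixed along $\mathcal{N}_\star'$ by the argument already used in Lemma~\ref{Lemma:StructureConformalBoundary}.

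The second ingredient is the computation of the complete jet of $\bmu$ on $\mathcal{N}_\star\cup\mathcal{N}_\star'$, which is what makes the auxiliary Cauchy problem well posed. Using the interior (constraint) equations on each hypersurface together with the evolution system~\eqref{ReducedCEFE}, each transverse derivative can be exchanged, via the Hermitian structure of $\bm{\mathcal{D}}^\mu$, for interior derivatives of quantities that are already known; the crucial structural property of Stewart's gauge --- established in Papers~I and~II --- is that this recursion closes, so that every formal derivative of $\bmu$ on the initial hypersurfaces is determined by $\bm{r}_\star$ alone. With all derivatives available, Whitney's extension theorem provides a smooth field on a neighbourhood of $\mathcal{S}_\star$ whose restriction to $S_\star$ serves as Cauchy data and which agrees to infinite order with the reconstructed data on $\mathcal{N}_\star\cup\mathcal{N}_\star'$.

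Since~\eqref{ReducedCEFE} is symmetric hyperbolic with respect to $\tau^a=l^a+n^a$, which is transverse to $S_\star$, the standard theory of Section~12.5 of~\cite{CFEBook} yields a unique smooth solution $\bmu$ on a neighbourhood $\mathcal{V}$ of $\mathcal{S}_\star$ in $J^+(\mathcal{S}_\star)$. I would then argue that this solution realises the prescribed characteristic data: because $\mathcal{N}_\star$ and $\mathcal{N}_\star'$ are characteristics of the system and the auxiliary data was built to agree to infinite order with the reconstructed data there, uniqueness forces the traces of $\bmu$ on the two null hypersurfaces to coincide with the data generated from $\bm{r}_\star$.

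The final and most delicate step is the propagation of the constraints, upgrading the solution of the \emph{reduced} equations to a genuine solution of the CEFE. Following the scheme indicated in the Remark after~\eqref{ReducedCEFE}, I would introduce the zero-quantities measuring the failure of each of~\eqref{CFE1}--\eqref{CFE6}, and, using the reduced equations together with the second Bianchi identity, derive a closed \emph{homogeneous} symmetric hyperbolic subsidiary system for them. Since the data was constructed to satisfy the constraints on $\mathcal{N}_\star\cup\mathcal{N}_\star'$, these zero-quantities vanish on the initial hypersurfaces, and uniqueness for the subsidiary system forces them to vanish throughout $\mathcal{V}$. Verifying that this subsidiary system is indeed homogeneous and symmetric hyperbolic, and that the complete constraint set --- including~\eqref{CFE5} --- holds on the initial configuration, is the main obstacle; the rest is bookkeeping analogous to Papers~I and~II. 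Finally, on $\mathcal{V}\setminus\mathscr{I}^+$ one has $\Xi>0$, so $\tilde{\bmg}=\Xi^{-2}\bmg$ is a regular Lorentzian metric and, away from the conformal boundary, the CEFE guarantee that it solves the vacuum Einstein field equations in a neighbourhood of null infinity.
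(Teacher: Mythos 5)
Your proposal is correct and follows essentially the same route as the paper: Rendall's reduction via an auxiliary Cauchy problem on $S_\star$, computation of the full jet of $\bmu$ on $\mathcal{N}_\star\cup\mathcal{N}'_\star$ from $\bm{r}_\star$ so that Whitney's extension theorem applies, local existence and uniqueness from the symmetric hyperbolicity of~\eqref{ReducedCEFE} with respect to $\tau^a=l^a+n^a$, and constraint propagation through the homogeneous subsidiary system as in Section~12.5 of~\cite{CFEBook}. The paper itself only sketches these steps, deferring the details to Papers~I and~II and to~\cite{CFEBook}, so your more explicit write-up is a faithful expansion rather than a different argument.
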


%%%%%%%%%%%%%%%%%%%%%%%%%%%%%%%%%%%%%%%%%%%%%%%%%%%%%%%%%%%%%%%%%%%%%%%%%%%%%%%
\section{Basic set up for the improved existence result}
%%%%%%%%%%%%%%%%%%%%%%%%%%%%%%%%%%%%%%%%%%%%%%%%%%%%%%%%%%%%%%%%%%%%%%%%%%%%%%%

In this section we briefly review the basic technical tools used in
our construction.

%%%%%%%%%%%%%%%%%%%%%%%%%%%%%%%%%%%%%%%%%%%%%%%%%%%%%%%%%%%%%
\subsection{Norms}
%%%%%%%%%%%%%%%%%%%%%%%%%%%%%%%%%%%%%%%%%%%%%%%%%%%%%%%%%%%%%

In the following we make use the same conventions for the norms of
functions as in Paper~II ---see
Section~\ref{PaperII-Section:ImprovedSetting}.

%%%%%%%%%%%%%%%%%%%%%%%%%%%%%%%%%%%%%%%%%%%%%%%%%%%%%%%%%%%%%
\subsection{Estimates for the frame and the conformal factor}
%%%%%%%%%%%%%%%%%%%%%%%%%%%%%%%%%%%%%%%%%%%%%%%%%%%%%%%%%%%%%

The first step in the analysis of the improved existence result is to
obtain control on the coefficients of the frame and the conformal
factor. The asymptotic CIVP considered in Paper~II leads to some
non-generic simplifications which do not arise when one of the initial
null hypersurfaces is not the conformal boundary. Nevertheless, the
basic analysis follows through.

\smallskip
In the following we make use of 
\begin{align*}
  \Delta_{e_{\star},\Xi_{\star}}\equiv\max\{\sup_{\mathcal{N}_\star,\mathcal{N}'_\star}
  \left(|Q|,|Q^{-1}|,|C^{\mathcal{A}}|,|P^{\mathcal{A}}|\right),
  \sup_{\mathcal{N}_\star}(\Xi)\}
\end{align*}
to measure the size of the initial data of frame and the conformal
factor. In addition, for convenience we define the scalar
\begin{align*}
\chi\equiv \Delta\log Q,
\end{align*}
which, being a derivative of a component of the frame, is at the same
level of the connection coefficients. A direct computation using the
definition of~$\chi=\Delta\log Q$ and the NP Ricci identities yields
\begin{align}
\label{EqDchi}
D\chi=\Psi_2+\bar\Psi_2+2\alpha\tau+2\bar\beta\tau+2\bar\alpha\bar\tau
+2\beta\bar\tau+2\tau\bar\tau-(\epsilon+\bar\epsilon)\chi.
\end{align}
In view of the gauge choice~$Q=1$ on~$\mathcal{N}'_{\star}$ it follows
that~$\chi=0$ on~$\mathcal{N}'_{\star}$. We also define
\begin{align*}
\varpi \equiv \beta-\bar\alpha
\end{align*}
corresponding to the only independent component of the connection on
the spheres~$\mathcal{S}_{u,v}$.

\smallskip
In order to start the analysis we take the following:

\begin{assumption}[\textbf{\em assumption to control the coefficients
      of the frame and conformal factor}]\label{Assumption:FrameSigma}
  Assume that we have a solution to the vacuum CEFEs in Stewart's
  gauge satisfying,
\begin{align*}
  ||\{\mu, \lambda, \alpha, \beta, \tau, \chi,\Sigma_2\}
  ||_{L^{\infty}(\mathcal{S}_{u,v})}
  \leq\Delta_{\Gamma}, 
\end{align*}
on a truncated causal diamond~$\mathcal{D}_{u,v_\bullet}^{\,t}$,
where~$\Delta_{\Gamma}$ is some constant.
\end{assumption}

This assumption is initially guaranteed on a sufficiently small
diamond. With the above assumption and the definition of~$\chi$,
$\Sigma_2$ and making use of the equations for the frame coefficients
implied by the NP commutators we obtain the following basic estimates
for metric and conformal factor:

\begin{lemma}[\textbf{\em control on the metric and conformal factor}]
 Given sufficiently small~$\varepsilon>0$ there exist constants~$C_1$,
 $C_2$ and~$C_3$ depending~$\Delta_{e_{\star},\Xi_{\star}}$
 and~$\Delta_\Gamma$ such that
\begin{align*}
&||Q,Q^{-1},P^{\mathcal{A}},(P^{\mathcal{A}})^{-1}||_{L^{\infty}(\mathcal{S}_{u,v})}\leq
   C_1(\Delta_{e_{\star},\Xi_{\star}}), \\
&  ||C^{\mathcal{A}}||_{L^{\infty}(\mathcal{S}_{u,v})}
   \leq C_2(\Delta_{e_{\star},\Xi_{\star}},\Delta_{\Gamma})\varepsilon, \\
& ||\Xi||_{L^{\infty}(\mathcal{S}_{u,v})}\leq C_3(\Delta_{e_{\star},\Xi_{\star}}),
\end{align*}
on~$\mathcal{D}_{u,v_\bullet}^{\,t}$. Moreover one has
\begin{align*}
\sup_{u,v}|\mbox{\em Area}(\mathcal{S}_{u,v})
-\mbox{\em Area}(\mathcal{S}_{0,v})|\leq
C(\Delta_{e_{\star},\Xi_{\star}})\Delta_{\Gamma}\varepsilon.
\end{align*}
\end{lemma}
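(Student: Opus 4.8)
The plan is to read off, from the NP commutators and the CFE, a system of \emph{transport equations} for the frame coefficients and the conformal factor along the direction $\bmn=Q\bmpartial_u$, and then to integrate these as ordinary differential equations in $u$ (the short direction, of extent at most $\varepsilon$) starting from the initial hypersurface $\mathcal{N}_\star$ at $u=0$. Throughout, the connection coefficients $\{\mu,\lambda,\alpha,\beta,\tau,\chi,\Sigma_2\}$ entering as sources are bounded by $\Delta_\Gamma$ via Assumption~\ref{Assumption:FrameSigma}, while the initial values of $Q,Q^{-1},P^{\mathcal{A}},C^{\mathcal{A}}$ and $\Xi$ on $\mathcal{N}_\star\cup\mathcal{N}'_\star$ are bounded by $\Delta_{e_\star,\Xi_\star}$; since $0\le u\le\varepsilon$, every integral $\int_0^u(\cdots)\,\mathrm{d}u'$ contributes a factor of size $\Delta_\Gamma\varepsilon$. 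The estimates must be carried out in a definite order, because the operator $\Delta=Q\bmpartial_u$ itself contains the unknown $Q$.

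First I would bound $Q$ and $Q^{-1}$. Since $\Delta\log Q=Q\bmpartial_u\log Q=\bmpartial_u Q$, the definition $\chi=\Delta\log Q$ reads simply $\bmpartial_u Q=\chi$; integrating from $u=0$ gives $Q=Q|_{\mathcal{N}_\star}+\int_0^u\chi\,\mathrm{d}u'$, so that $\|Q-Q|_{\mathcal{N}_\star}\|_{L^\infty(\mathcal{S}_{u,v})}\le\Delta_\Gamma\varepsilon$. For $\varepsilon$ small, $Q$ stays close to its initial value and in particular bounded away from zero, yielding $C_1(\Delta_{e_\star,\Xi_\star})$ for both $Q$ and $Q^{-1}$. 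With $Q^{-1}$ in hand the operator $\Delta$ has bounded coefficients, and I would turn to the frame equations implied by the NP commutators (cf.\ equations~\eqref{PaperII-framecoefficient1}--\eqref{PaperII-framecoefficient6} of Paper~II) for $P^{\mathcal{A}}$, which are linear and homogeneous in $P^{\mathcal{A}},\bar P^{\mathcal{A}}$ with coefficients $\mu,\lambda$. A Gr\"onwall argument along $u$ then gives $\|P^{\mathcal{A}}\|_{L^\infty(\mathcal{S}_{u,v})}\le\|P^{\mathcal{A}}|_{\mathcal{N}_\star}\|\,e^{C\Delta_\Gamma\varepsilon}\le C_1$, and the smallness of $\varepsilon$ keeps the induced metric $\sigma$ on $\mathcal{S}_{u,v}$ non-degenerate, controlling the inverse $(P^{\mathcal{A}})^{-1}$.

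The remaining quantities follow by direct integration. Because $C^{\mathcal{A}}=0$ on $\mathcal{N}_\star$, integrating its transport equation $\Delta C^{\mathcal{A}}$, whose source is schematically $(\tau,\pi)\,P^{\mathcal{A}}$ with $\pi=\alpha+\bar\beta$ by Lemma~\ref{Lemma1}, from $u=0$ produces a genuinely small bound $\|C^{\mathcal{A}}\|_{L^\infty(\mathcal{S}_{u,v})}\le C_2(\Delta_{e_\star,\Xi_\star},\Delta_\Gamma)\varepsilon$, the factor $\varepsilon$ arising precisely from the vanishing initial value. For the conformal factor I would use $\Delta\Xi=\Sigma_2$, already recorded in the proof of Lemma~\ref{Lemma:StructureConformalBoundary}, integrating to $\Xi=\Xi|_{\mathcal{N}_\star}+\int_0^u Q^{-1}\Sigma_2\,\mathrm{d}u'$, bounded by $C_3(\Delta_{e_\star,\Xi_\star})$. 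Finally, the $\bmn$-evolution of the area element $\sqrt{\det\sigma}$ of $\mathcal{S}_{u,v}$ is governed by the convergence $\mu$, so $\bmpartial_u\,\mathrm{Area}(\mathcal{S}_{u,v})=-2\int Q^{-1}\mu\sqrt{\det\sigma}\,\mathrm{d}^2x$; bounding $\mu$ by $\Delta_\Gamma$ and $\mathrm{Area}$ itself by Gr\"onwall, then integrating over $u\in[0,\varepsilon]$, gives $\sup_{u,v}|\mathrm{Area}(\mathcal{S}_{u,v})-\mathrm{Area}(\mathcal{S}_{0,v})|\le C(\Delta_{e_\star,\Xi_\star})\Delta_\Gamma\varepsilon$.

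The hard part will not be any single integration but the \emph{coupling and non-degeneracy}: since $\Delta$ carries the unknown $Q$ and the $P^{\mathcal{A}}$-estimates feed into those for $C^{\mathcal{A}}$ and the area, one must respect the order above and, crucially, establish the \emph{lower} bounds on $Q$ and on $\det\sigma$ that keep $\Delta$ and the sphere metric non-degenerate. These lower bounds are exactly where the smallness of $\varepsilon$ enters: the $O(\Delta_\Gamma\varepsilon)$ deviations from the initial data cannot overcome the bounds encoded in $\Delta_{e_\star,\Xi_\star}$, so the constants $C_1,C_2,C_3$ can be fixed uniformly on $\mathcal{D}^{\,t}_{u,v_\bullet}$.
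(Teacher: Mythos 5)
Your proposal is correct and takes essentially the same route as the paper, which (deferring details to the analogous proposition of Paper~II) obtains these bounds by integrating the short-direction ($\Delta$-) transport equations implied by the NP commutators --- $\partial_u Q=\chi$, $\Delta P^{\mathcal{A}}=-\mu P^{\mathcal{A}}-\bar{\lambda}\bar{P}^{\mathcal{A}}$, $\Delta C^{\mathcal{A}}$ sourced by $(\tau,\pi)P^{\mathcal{A}}$ with vanishing data on $\mathcal{N}_\star$, and $\Delta\Xi=\Sigma_2$ --- using the bounds of Assumption~\ref{Assumption:FrameSigma}, Gr\"onwall, and the smallness of $\varepsilon$. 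Your ordering of the estimates (first $Q,Q^{-1}$, then $P^{\mathcal{A}}$ and its inverse, then $C^{\mathcal{A}}$, $\Xi$, and the area element via $\mu$) and your observation that $\varepsilon$-smallness secures the lower bounds on $Q$ and $\det\sigma$ are exactly the points the paper's argument relies on.
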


%%%%%%%%%%%%%%%%%%%%%%%%%%%%%%%%%%%%%%%%%%%%%%%%%%%%%%%%%%%%%%%%%%%%%%%%%%%%%%%
\section{Main analysis}
%%%%%%%%%%%%%%%%%%%%%%%%%%%%%%%%%%%%%%%%%%%%%%%%%%%%%%%%%%%%%%%%%%%%%%%%%%%%%%%

In this section we present the main analysis of the article. The
strategy followed is very similar to that in Paper~II. In view of
this, most of the proofs of the various lemmas and propositions are
omitted and we focus our attention at the points where there may be
differences in the analysis of Paper~II.

%%%%%%%%%%%%%%%%%%%%%%%%%%%%%%%%%%%%%%%%%%%%%%%%%%%%%%%%%%%%%
\subsection{Statement of the main result}
%%%%%%%%%%%%%%%%%%%%%%%%%%%%%%%%%%%%%%%%%%%%%%%%%%%%%%%%%%%%%
\label{Subsection:NormsAndMainResult}

As in Paper~II, we make use of a number of tailor-made quantities to
control the various assumptions and conclusions of the bootstrap
argument underpinning our analysis.

\begin{enumerate}[(i)]
\item Quantity controlling the initial value of the connection coefficients, given
  by
 \begin{align*}
   \Delta_{\Gamma_{\star}}
   \equiv\sup_{\mathcal{S}_{u,v}\subset\mathcal{N}_\star,\mathcal{N}'_\star}
   \sup_{\Gamma\in\{\mu,\lambda,\rho,\sigma,\alpha,\beta,\tau,\epsilon\}}
   \max\{1,\sum_{i=0}^1||
   \nablasl^i\Gamma||_{L^{\infty}(\mathcal{S}_{u,v})},\sum_{i=0}^2
   ||\nablasl^i\Gamma||_{L^4(\mathcal{S}_{u,v})},\sum_{i=0}^3
   ||\nablasl^i\Gamma||_{L^2(\mathcal{S}_{u,v})}\}.
\end{align*}

\item Quantity controlling the initial value of the derivative of conformal
  factor~$\Sigma_a$, given by
\begin{align*}
  \Delta_{\Sigma_{\star}}
  \equiv\sup_{\mathcal{S}_{u,v}\subset\mathcal{N}_\star}\sup_{j=1,...,4}
  \max\{1,\sum_{i=0}^1||\nablasl^i\Sigma_j||_{L^{\infty}(\mathcal{S}_{u,v})},\sum_{i=0}^2
  ||\nablasl^i\Sigma_j||_{L^4(\mathcal{S}_{u,v})},\sum_{i=0}^3||
  \nablasl^i\Sigma_j||_{L^2(\mathcal{S}_{u,v})}\}.
\end{align*}

\item Quantity controlling the initial value of the components of the Ricci
  curvature given by
\begin{align*}
  &\Delta_{\Phi_{\star}}\equiv
    \sup_{\mathcal{S}_{u,v}\subset\mathcal{N}_\star,\mathcal{N}'_\star}
    \sup_{\Phi\in\{\Phi_{00},\Phi_{01},\Phi_{02},\Phi_{11},\Phi_{12}\}}
    \max\{1,\sum_{i=0}^1||\nablasl^i\Phi||_{L^4(\mathcal{S}_{u,v})},
    \sum_{i=0}^2||\nablasl^i\Phi||_{L^2(\mathcal{S}_{u,v})}\} \\
  &\quad+\sum_{i=0}^3\sup_{\Phi\in\{\Phi_{00},\Phi_{01},\Phi_{02},\Phi_{11},\Phi_{12}\}}
    ||\nablasl^i\Phi||_{L^2(\mathcal{N}_\star)}
    +\sup_{\Phi\in\{\Phi_{01},\Phi_{02},\Phi_{11},\Phi_{12},\Phi_{22}\}}
    ||\nablasl^i\Phi||_{L^2(\mathcal{N}'_\star)}.
\end{align*}

\item Quantity controlling the initial value of the components of the rescaled
  Weyl curvature, given by
\begin{align*}
  &\Delta_{\phi_{\star}}\equiv
    \sup_{\mathcal{S}_{u,v}\subset\mathcal{N}_\star,\mathcal{N}'_\star}
    \sup_{\phi\in\{\phi_0,\phi_1,\phi_2,\phi_3,\phi_4\}}
    \max\{1,\sum_{i=0}^1||\nablasl^i\phi
    ||_{L^4(\mathcal{S}_{u,v})},\sum_{i=0}^2
    ||\nablasl^i\phi||_{L^2(\mathcal{S}_{u,v})}\} \\
  &\quad+\sum_{i=0}^3\sup_{\phi\in\{\phi_0,\phi_1,\phi_2,\phi_3\}}
    ||\nablasl^i\phi||_{L^2(\mathcal{N}_\star)}
    +\sup_{\phi\in\{\phi_1,\phi_2,\phi_3,\phi_4\}}
    ||\nablasl^i\phi||_{L^2(\mathcal{N}'_\star)}.
\end{align*}

\item Quantity controlling the components of the Ricci curvature
  components at later null hypersurfaces, given by
\begin{align*}
  \Delta_{\Phi}\equiv\sum_{i=0}^3
    \sup_{\Phi\in\{\Phi_{00},\Phi_{01},\Phi_{02},\Phi_{11},\Phi_{12}\}}
    ||\nablasl^i\Phi||_{L^2(\mathcal{N}_u^t)}+
    \sup_{\Phi\in\{\Phi_{01},\Phi_{02},\Phi_{11},\Phi_{12},\Phi_{22}\}}
    ||\nablasl^i\Phi||_{L^2(\mathcal{N}'_v{}^t)},
\end{align*}
where the suprema in~$u$ and~$v$ are taken
over~$\mathcal{D}^t_{u,v_\bullet}$.

\item Supremum-type norm over the~$L^2$-norm of the components of the
  Ricci curvature at spheres of constant~$u$,~$v$, given by
\begin{align*}
  \Delta_{\Phi}(\mathcal{S})\equiv\sum_{i=0}^2
    \sup_{u,v}||\nablasl^i\{\Phi_{00},\Phi_{01},\Phi_{02},
    \Phi_{11},\Phi_{12}\}||_{L^2(\mathcal{S}_{u,v})},
\end{align*}
where the supremum is taken over~$\mathcal{D}^t_{u,v_\bullet}$.

\item Norm for the components of the Weyl tensor at later null
  hypersurfaces, given by
\begin{align*}
  \Delta_{\phi}\equiv\sum_{i=0}^3
    \sup_{\phi\in\{\phi_0,\phi_1,\phi_2,\phi_3\}}
    ||\nablasl^i\phi||_{L^2(\mathcal{N}_u^t)}
    +\sup_{\phi\in\{\phi_1,\phi_2,\phi_3,\phi_4\}}
    ||\nablasl^i\phi||_{L^2(\mathcal{N}'_v{}^t)}, 
\end{align*}
where the suprema in~$u$ and~$v$ are taken
over~$\mathcal{D}^t_{u,v_\bullet}$.

\item Supremum-type norm over the~$L^2$-norm of the components of the
  rescaled Weyl curvature at spheres of constant~$u$, $v$, given by,
\begin{align*} 
  \Delta_{\phi}(\mathcal{S})\equiv\sum_{i=0}^2\sup_{u,v}
    ||\nablasl^i\{\phi_0,\phi_1,\phi_2,\phi_3\}
    ||_{L^2(\mathcal{S}_{u,v})},
\end{align*}
with the supremum taken over~$\mathcal{D}^t_{u,v_\bullet}$ and in
which~$u$ will be taken sufficiently small to apply our estimates.

\end{enumerate}

The main result of this article can be expressed, in terms of the
above quantities and norms, as:

\begin{theorem}[\textbf{\em local extension of null infinity}]
\label{Theorem:LocalExtensionNullInfinity}
Given regular initial data for the conformal Einstein field equations
on~$\mathcal{N}_{\star}\cup\mathcal{N}'_\star$ such that
$\Xi|_{v=v_\bullet}$ for some $v_\bullet\in[0,\infty)$, there exists
  $\varepsilon>0$ such that an unique smooth solution to the vacuum
  conformal Einstein field equations exists in the region
\begin{align*}
\mathcal{D} \equiv \{ 0\leq u \leq \varepsilon, \; 0\leq v \leq v_\bullet\}
\end{align*}
and such that~$\varepsilon_{\star}$ can be chosen to depend only
on~$\Delta_{e_{\star},\Xi_{\star}}$, $\Delta_{\Gamma_{\star}}$,
$\Delta_{\Sigma_{\star}}$, $\Delta_{\Phi_{\star}}$
and~$\Delta_{\phi_{\star}}$. The set defined by the
condition~$v=v_\bullet$ can be identified with a portion of future
null infinity~$\mathscr{I}^+$. Furthermore, on~$\mathcal{D}$ one has
that
\begin{align*}
  &\sup_{u,v}\sup_{\Gamma\in\{\mu,\lambda,\rho,\sigma,\alpha,\beta,\tau,\epsilon,\chi\}}
  \max\{\sum_{0}^1||\nablasl^i\Gamma||_{L^{\infty}(\mathcal{S}_{u,v})},
  \sum_{i=0}^2||\nablasl^i\Gamma||_{L^4(\mathcal{S}_{u,v})},
  \sum_{i=0}^3||\nablasl^i\Gamma||_{L^2(\mathcal{S}_{u,v})}\} \\
  &\hspace{1cm}+\sup_{u,v}\sup_{j=1,...,4}
  \{\sum_{i=0}^1||\nablasl^i\Sigma_j||_{L^{\infty}(\mathcal{S}_{u,v})},
  \sum_{i=0}^2||\nablasl^i\Sigma_j||_{L^4(\mathcal{S}_{u,v})},
  \sum_{i=0}^3||\nablasl^i\Sigma_j||_{L^2(\mathcal{S}_{u,v})}\}
  +\Delta_{\Phi}+\Delta_{\phi}\\
  &\hspace{2cm}\leq C(I,\Delta_{e_{\star},\Xi_{\star}},
  \Delta_{\Gamma_{\star}},\Delta_{\Sigma_{\star}},
  \Delta_{\Phi_{\star}},\Delta_{\phi_{\star}}),
\end{align*}
with $I\equiv[0,v_\bullet]$.
\end{theorem}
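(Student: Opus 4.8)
The plan is to prove Theorem~\ref{Theorem:LocalExtensionNullInfinity} by a bootstrap (continuity) argument on the truncated causal diamonds~$\mathcal{D}^{\,t}_{u,v_\bullet}$, following the strategy of Paper~II but adapting it to the situation in which \emph{both} initial hypersurfaces lie in the physical spacetime away from their intersections with~$\mathscr{I}^+$. The central device is a \emph{last-slice argument}: one starts with the basic local existence guaranteed by Rendall's reduction strategy (Theorem on the standard asymptotic characteristic problem above) on a small neighbourhood of~$\mathcal{S}_\star$, and then one shows that the existence domain can be extended in the short~$u$-direction up to some~$u=\varepsilon$ uniformly over the full range~$v\in[0,v_\bullet]$ of the long direction, \emph{including} the slice~$v=v_\bullet$ which by Lemma~\ref{Lemma:StructureConformalBoundary} lies on null infinity. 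First I would set up the bootstrap assumptions as mild (e.g. factor-of-two) enlargements of the initial quantities~$\Delta_{\Gamma_\star},\Delta_{\Sigma_\star},\Delta_{\Phi_\star},\Delta_{\phi_\star}$ for the sphere-norms of the connection, conformal-factor derivatives, Ricci and Weyl components appearing on the left-hand side of the stated inequality.

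The key steps, in order, are as follows. First, invoke the Lemma on control of the metric and conformal factor (proved above) to bound the frame coefficients~$Q,Q^{-1},P^{\mathcal{A}},C^{\mathcal{A}}$ and~$\Xi$ in~$L^\infty$ on~$\mathcal{S}_{u,v}$ in terms of~$\Delta_{e_\star,\Xi_\star}$ and~$\Delta_\Gamma$; crucially the~$C^{\mathcal{A}}$ and area estimates come with a factor~$\varepsilon$, which is what allows the short direction to be controlled. Second, integrate the transport (NP Ricci-identity) equations for the connection coefficients~$\{\mu,\lambda,\rho,\sigma,\alpha,\beta,\tau,\epsilon,\chi\}$ along the~$\bml$ and~$\bmn$ directions, together with their angular derivatives~$\nablasl^i$ up to the orders specified in each norm, using Gr\"onwall-type estimates to absorb the connection-times-connection and connection-times-curvature source terms; the~$\chi$ evolution~\eqref{EqDchi} and the gauge normalisation~$\chi=0$ on~$\mathcal{N}'_\star$ feed directly into this. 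Third, establish the curvature estimates: derive and use the~$L^2$ energy estimates on the null hypersurfaces~$\mathcal{N}^t_u$ and~$\mathcal{N}'_v{}^t$ for the Bianchi-type system satisfied by the rescaled Weyl components~$\bm\phi$ and the coupled system for the Ricci components~$\bm\Phi$, controlling~$\Delta_\phi,\Delta_\Phi$ and the sphere-norms~$\Delta_\phi(\mathcal{S}),\Delta_\Phi(\mathcal{S})$ in terms of the initial data. Fourth, close the bootstrap by showing the recovered bounds are strictly better than the assumed ones provided~$\varepsilon$ is chosen small enough depending only on the initial quantities, and use continuity in~$t$ to conclude the estimates persist up to~$u=\varepsilon$ for all~$v\in[0,v_\bullet]$. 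Finally, identify the slice~$v=v_\bullet$ with~$\mathscr{I}^+$ via Lemma~\ref{Lemma:StructureConformalBoundary}, and note that the Remark on constraint propagation upgrades the solution of the reduced equations to a genuine solution of the CEFE.

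The main obstacle I anticipate is the curvature energy estimate, specifically controlling the Weyl and Ricci components \emph{uniformly up to and including the slice~$v=v_\bullet$ on the conformal boundary}. In Paper~II one of the initial hypersurfaces \emph{was} the conformal boundary, which produced non-generic simplifications (e.g. vanishing of certain components forced by~$\Xi=0$); here~$\Xi$ vanishes only on a single cut and its transport along~$\bmn$ is governed by~$\Delta^2\Xi=-\Xi\Phi_{22}$, so one must show that the energy fluxes through~$\mathcal{N}'_{v_\bullet}$ do not degenerate as~$\Xi\to 0$. The delicate point is that the integration-by-parts/divergence identity producing the energy estimate must have good signs for \emph{both} null directions simultaneously on the truncated diamond, and the boundary terms on the corner~$\mathcal{S}'_\star=\mathcal{N}_\star\cap\mathscr{I}^+$ must be shown to be controlled rather than singular. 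I would handle this by checking that the symmetric-hyperbolic structure of~\eqref{ReducedCEFE} with respect to~$\tau^a=l^a+n^a$ survives up to the boundary (formal regularity of the CEFE at~$\mathscr{I}$), so that the Hermitian matrices~$\bm{\mathcal{D}}^\mu$ remain non-degenerate there, and then transcribing the Paper~II energy argument with the~$\varepsilon$-smallness of the short direction absorbing the extra source terms arising from the non-vanishing of~$\Xi$ off the cut.
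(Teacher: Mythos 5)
Your proposal follows the same architecture as the paper's proof---frame/conformal-factor bounds, transport estimates for the connection coefficients and $\Sigma_a$, null-hypersurface energy estimates for $\bm\phi$ and $\bm\Phi$, bootstrap closure with $\varepsilon$-smallness, constraint propagation, and a last-slice argument---so in substance it is the paper's proof. The one genuine divergence is your handling of the boundary slice $v=v_\bullet$. You propose to run the energy estimates directly \emph{up to and including} the conformal boundary, checking non-degeneracy of $\bm{\mathcal{D}}^\mu$ there and controlling corner terms at $\mathcal{S}'_\star$. The paper avoids this entirely: since the CEFE are formally regular at $\Xi=0$ (indeed for $\Xi<0$), it extends the free data on $\mathcal{N}_\star$ smoothly and in a controlled way past the cut, to $v\in[0,v_\bullet+\tfrac{1}{10}]$, runs the whole last-slice argument on the enlarged rectangle $\mathcal{D}'\supset\mathcal{D}$, and then restricts; by causality the solution on $\mathcal{D}$ is independent of the extension, and Lemma~\ref{Lemma:StructureConformalBoundary} (uniqueness of $\Xi=0$ along $\mathcal{N}'_{v_\bullet}$ from $\Delta^2\Xi=-\Xi\Phi_{22}$) identifies $\{v=v_\bullet\}$ with a portion of $\mathscr{I}^+$. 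This matters for exactly the reason the paper flags: the existence domains produced by the bootstrap/last-slice machinery are \emph{open} sets, so ``continuity in $t$'' gives you uniform estimates on the closed diamond but not, by itself, a smooth solution on the closed slice $v=v_\bullet$; the extension trick manufactures interior points there and reduces everything verbatim to the setting of Papers~I and~II. Note also that your anticipated obstacle---degeneration of energy fluxes as $\Xi\to 0$---is illusory: $\Xi$ appears only in the source terms of the reduced system, never in the principal part, so nothing degenerates at the conformal boundary; this is precisely what the conformal representation buys, and it is why no weighted norms are needed. Your direct-boundary route could be made to work (the equations are regular there), at the cost of re-deriving the estimates with boundary and corner contributions; the paper's extension argument gets the same conclusion with no new estimates, at the cost of an arbitrary but causally irrelevant choice of extended data.
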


The proof of the above result is based on a lengthy bootstrap
argument. All the main ingredients for it have already been developed
in Papers~I and~II. The main task in this article is to verify that
the arguments follow through in the slightly different setting of the
problem of the local extension of null infinity. The various steps in
the proof are as follows:

\begin{itemize}
\item[(0)] Construct~$L^\infty$ estimates for the components of the
  frame and the conformal factor and its derivatives on the
  spheres~$\mathcal{S}_{u,v}$ in terms of initial data and the
  length~$\varepsilon$ of the short direction of integration.  These
  bounds, in turn, allow to control in a systematic manner the
  solutions of the transport equations implied by the CEFE along null
  directions.

\item[(i)] Construction of~$L^\infty$, $L^2$ and~$L^4$ estimates for
  the connection coefficients over the
  spheres~$\mathcal{S}_{u,v}$. These estimates require the assumption
  that the components of the curvature are bounded.

\item[(ii)] Show that the components of the curvature are bounded in
  the~$L^2$ norm on the spheres~$\mathcal{S}_{u,v}$. These bounds are
  given in terms of the initial conditions an the value of the
  curvature of the light cones~$\mathcal{N}_u$ and~$\mathcal{N}'_{v}$.

\item[(iii)] Show that the norms of the curvature on the light cones
  can be bounded in terms of the initial data.

\item[(iv)] Last slice argument. Make use of the estimates obtained in
  the previous steps to show that the solution to the evolution
  equations exists close to~$\mathcal{N}_\star$ as long as one has
  control of the data on this initial hypersurface.

\end{itemize}

%%%%%%%%%%%%%%%%%%%%%%%%%%%%%%%%%%%%%%%%%%%%%%%%%%%%%%%%%%%%%
\subsection{Estimates for the connection coefficients and the
  derivative of conformal factor }
%%%%%%%%%%%%%%%%%%%%%%%%%%%%%%%%%%%%%%%%%%%%%%%%%%%%%%%%%%%%%

In this section we provide a discussion of the first step of our
bootstrap argument and provide estimates for connection coefficients
and the derivative of conformal factor. In order to prove these
estimates it is assumed that the norms of the components of the
curvature spinors are bounded. It follows then that the short
range~$\varepsilon$ can be chosen such that connection coefficients
and the derivatives of the conformal factor can be controlled by the
norm of the initial data and the
norm~$\Delta_{\Phi}(\mathcal{S})$. The main tool in this estimation
are the transport equations satisfied by the various fields. Most of
the connection coefficients satisfy transport equations in both
the~$D$ and~$\Delta$ directions. Only for the connection
coefficients~$\tau$ and~$\chi$, we only have their long direction~$D$
equations. Crucially, however, these equations do not contain
quadratic terms and can basically be regarded as linear equations.

\smallskip 
The first step in the argumentation is to control the supremum norm of
the connection coefficients and the derivatives of the conformal
factor
---cf. Proposition~\ref{PaperII-Proposition:CEFEFirstEstimateConnectionSigma}
in Paper~II. The assumptions in this estimate are that there exists a
positive constant~$\Delta_{\Gamma,\Sigma}$
in~$\mathcal{D}^t_{u,v_\bullet}$ such that
\begin{align*}
  \sup_{u,v}||\{\mu, \lambda, \alpha, \beta, \epsilon, \rho, \sigma,
  \tau, \chi,\Sigma_1,\Sigma_2,\Sigma_3,\Sigma_4\}||_{L^\infty(\mathcal{S}_{u,v})}
  \leq \Delta_{\Gamma,\Sigma}\,,
\end{align*}
in a causal diamond and that, moreover,
\begin{align*}
  \sup_{u,v}||\nablasl^3\tau||_{L^2(\mathcal{S}_{u,v})}
  <\infty, \quad \Delta_{\Phi}(\mathcal{S})
  <\infty,\quad \Delta_{\Phi}<\infty,
  \quad \Delta_{\phi}(\mathcal{S})<\infty,\quad \Delta_{\phi}<\infty.
\end{align*}
Next, one constructs~$L^4$-estimates of the connection coefficients
and the derivative of conformal factor
---cf. Proposition~\ref{PaperII-Proposition:CEFESecondEstimateConnection}
in Paper~II. These estimates are needed to make use of the
Gagliardo-Nirenberg inequality in dealing with the non-linearities of
the evolution equations when constructing~$L^4$-estimates. This step
requires the further assumption that
\begin{align*}
\sup_{u,v}||\nablasl\{\mu, \lambda, \alpha, \beta, \epsilon, \rho,
\sigma,\Sigma_1,\Sigma_2,\Sigma_3,\Sigma_4\}
||_{L^4(\mathcal{S}_{u,v})}\leq\Delta_{\Gamma,\Sigma}.
\end{align*}
The last step in this process is a~$L^2$-estimate for the connection
coefficients and the derivative of conformal factor
---cf. Proposition~\ref{PaperII-Proposition:CEFEThirdEstimateConnection}
in Paper~II--- which is obtained without the need of any further
assumptions.

In order to estimate the components of the curvature, we
need~$L^2$-estimates of the connection coefficients and derivatives of
the conformal factor up to third order. This can be achieved by a
method similar to the one used to estimate the undifferentiated fields
---cf. Proposition~\ref{PaperII-Proposition:CEFEImprovedEstimates} in
Paper~II. The analysis described in the previous paragraphs can be
summarised as follows:

\begin{proposition}[\textbf{\em estimates for the~$L^\infty$,
        $L^4$ and~$L^2$ norms of the connection coefficients and the
      derivatives of the conformal factor to second derivative}]
\label{PropositionSummary:EstimatesConnection}
Assume 
\begin{align*}
\Delta_{\Phi}<\infty, \qquad  \Delta_{\phi}<\infty,
\end{align*}
in the truncated diamond~$\mathcal{D}_{u,v_\bullet}^{\,t}$. Then there
exists
\begin{align*}
\varepsilon_{\star}=\varepsilon_{\star}
(I,\Delta_{e_{\star},\Xi_{\star}},\Delta_{\Gamma_{\star}},
\Delta_{\Sigma_{\star}},\Delta_{\Phi},\Delta_{\phi},
\sup_{u,v}||\nablasl^3\tau||_{L^2(\mathcal{S}_{u,v})})
\end{align*}
such that for~$\varepsilon\leq\varepsilon_\star$, we have
\begin{align*}
  &\sup_{u,v}\sup_{\Gamma\in\{\mu,\lambda,\alpha,\beta,\epsilon,\rho,\sigma,\tau,\chi\}}
  \left(||\Gamma||_{L^{\infty}(\mathcal{S}_{u,v})}+\sum_{i=0}^1||\nablasl^i\Gamma
  ||_{L^4(\mathcal{S}_{u,v})}+\sum_{i=0}^2||\nablasl^i
  \Gamma||_{L^2(\mathcal{S}_{u,v})}\right)\\ 
  &\leq C(I,\Delta_{e_{\star},\Xi_{\star}},\Delta_{\Gamma_\star},
  \Delta_{\Sigma_\star},\Delta_{\Phi}(\mathcal{S}),\Delta_{\phi}(\mathcal{S})), \\
  &\sup_{u,v}\sup_{j=1,...,4}\left(||\Sigma_j||_{L^{\infty}(\mathcal{S}_{u,v})}
  +\sum_{i=0}^1||\nablasl^i\Sigma_j||_{L^4(\mathcal{S}_{u,v})}+\sum_{i=0}^2
  ||\nablasl^i\Sigma_j||_{L^2(\mathcal{S}_{u,v})}\right)
  \leq C(\Delta_{e_{\star},\Xi_{\star}},\Delta_{\Sigma_\star}), 
\end{align*}
in the truncated diamond~$\mathcal{D}_{u,v_\bullet}^{\,t}$.
\end{proposition}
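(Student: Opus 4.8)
The plan is to obtain the three families of estimates in the stated order --- first $L^\infty$, then $L^4$, then $L^2$ up to second angular derivatives --- by integrating the transport equations satisfied by the connection coefficients and by the components $\Sigma_j$ along the null generators, using the smallness of $\varepsilon$ to absorb the quadratic self-interactions. Since the argument runs parallel to Propositions~\ref{PaperII-Proposition:CEFEFirstEstimateConnectionSigma}, \ref{PaperII-Proposition:CEFESecondEstimateConnection} and~\ref{PaperII-Proposition:CEFEThirdEstimateConnection} of Paper~II, the real task is to check that each step closes in the present setting, where neither initial hypersurface is the conformal boundary and therefore none of the Paper~II simplifications are available.

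First I would record the bootstrap assumptions --- the $L^\infty$ bound $\Delta_{\Gamma,\Sigma}$ on the undifferentiated fields together with the $L^4$ bound on their first angular derivatives --- under the standing hypotheses $\Delta_{\Phi}<\infty$ and $\Delta_{\phi}<\infty$. Each field satisfies a transport equation whose right-hand side is a curvature source plus terms quadratic in $\{\Gamma,\Sigma\}$, differentiated along either $\bml$ (the operator $D$) or $\bmn$ (the operator $\Delta$). For every field admitting a short-direction ($\Delta$) equation I would integrate in $u$ over $[0,\varepsilon]$, writing the field as its datum on~$\mathcal{N}_\star$ plus the integral of the right-hand side; the factor $\varepsilon$ then multiplies the curvature and quadratic contributions, so that choosing $\varepsilon$ small relative to $\Delta_{e_{\star},\Xi_{\star}}$, $\Delta_{\Gamma_{\star}}$, $\Delta_{\Sigma_{\star}}$ and the curvature norms yields a bound in terms of the initial data alone that strictly improves the bootstrap assumption. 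The control of the $\Sigma_j$ decouples cleanly and depends only on $\Delta_{e_{\star},\Xi_{\star}}$ and $\Delta_{\Sigma_{\star}}$.

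The delicate fields are $\tau$ and $\chi$, which propagate only in the long direction $D$ (compare~\eqref{EqDchi}), so no power of $\varepsilon$ is available to close them. Instead I would use that their $D$-equations are linear in the quantity being estimated: in~\eqref{EqDchi} the field $\chi$ enters only through $-(\epsilon+\bar\epsilon)\chi$, while $\Psi_2+\bar\Psi_2$ and the products $\alpha\tau,\ \beta\tau,\ \ldots$ are sources already controlled by the curvature bounds and the $L^\infty$ estimates for the remaining coefficients; the $\tau$-equation has the same structure. A Gr\"onwall estimate along each generator of~$\mathcal{N}_u$, over the fixed interval $v\in[0,v_\bullet]$ and starting from $\chi=0$ on~$\mathcal{N}'_\star$, then bounds $\tau$ and $\chi$ in terms of their data and sources; this is exactly why the final constants depend on $I=[0,v_\bullet]$ rather than being uniform in $v_\bullet$.

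With $L^\infty$ control in hand I would commute the transport equations with $\nablasl$ and with $\nablasl^2$ to propagate the bounds to the $L^4$ and $L^2$ norms of the derivatives, handling the resulting products of coefficients with their angular derivatives by the Gagliardo--Nirenberg inequality on the spheres~$\mathcal{S}_{u,v}$ (splitting, for instance, a product into an $L^\infty$ factor times an $L^4$ factor, so that the lower-order bounds feed into the next order). The final $L^2$ step needs no new assumption, since the $L^\infty$ and $L^4$ bounds already supply the required structure; combining the three gives the stated inequality. I expect the main obstacle to be precisely $\tau$ and $\chi$: one must verify that their coupling to the other coefficients and to the $\Sigma_j$ stays genuinely linear at every differentiated order, so that the Gr\"onwall argument closes over the whole of $[0,v_\bullet]$ --- this is also why $\varepsilon_\star$ is allowed to depend on the input quantity $\sup_{u,v}||\nablasl^3\tau||_{L^2(\mathcal{S}_{u,v})}$. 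A secondary point is to confirm that, absent the conformal-boundary simplifications of Paper~II, the control of $\Sigma_j$ survives near the corner~$\mathcal{S}_{0,v_\bullet}$ where $\Xi\to 0$.
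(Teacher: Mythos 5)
Your proposal is correct and takes essentially the same route as the paper, which itself defers the details to Propositions~\ref{PaperII-Proposition:CEFEFirstEstimateConnectionSigma}--\ref{PaperII-Proposition:CEFEThirdEstimateConnection} of Paper~II: integration of the $\Delta$-equations over the short direction $u\in[0,\varepsilon]$ to absorb the quadratic terms, a linear Gr\"onwall argument along the long direction for the exceptional fields $\tau$ and $\chi$ (using $\chi=0$ on $\mathcal{N}'_\star$ and yielding the dependence on $I=[0,v_\bullet]$), and the $L^\infty\to L^4\to L^2$ hierarchy closed with the Gagliardo--Nirenberg inequality on the spheres $\mathcal{S}_{u,v}$. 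Your two flagged caveats (linearity of the $\tau$, $\chi$ system at differentiated orders, and regularity of the $\Sigma_j$ estimates near the corner where $\Xi\to 0$) are exactly the points the paper addresses by noting that the long-direction equations contain no quadratic terms in the unknown being estimated and that the CEFE remain formally regular at $\Xi=0$.
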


Armed with~$L^2$-estimates for the connection coefficients and
derivatives of the conformal factor up to the second order, it is now
possible to show that the norms~$\Delta_{\Phi}(\mathcal{S})$
and~$\Delta_{\phi}(\mathcal{S})$ are finite ---see
Proposition~\ref{PaperII-Proposition:CEFEFirstEstimateRicciCurvature}
in Paper~II. More precisely, one has that:
\begin{proposition}[\textbf{\em boundedness of the components of
        the curvature}] Assume that
\begin{align*}
  \Delta_{\Phi}<\infty, \qquad \Delta_{\phi}<\infty, \qquad \sup_{u,v}||
  \nablasl^3\tau||_{L^2(\mathcal{S}_{u,v})}<\infty
\end{align*}
in the truncated diamond~$\mathcal{D}_{u,v_\bullet}^t$. Then there
exists
\begin{align*}
\varepsilon_{\star}=\varepsilon_{\star}(I,\Delta_{e_{\star},\Xi_{\star}},
\Delta_{\Gamma_{\star}},\Delta_{\Sigma_{\star}},\Delta_{\Phi_{\star}},
\Delta_{\phi_{\star}},\Delta_{\Phi},\Delta_{\phi},
\sup_{u,v}||\nablasl^3\tau||_{L^2(\mathcal{S}_{u,v})})
\end{align*}
such that for~$\varepsilon\leq\varepsilon_{\star}$, we have
\begin{align*}
  \Delta_{\Phi}(\mathcal{S})<\infty,
  \qquad \Delta_{\phi}(\mathcal{S})<\infty.
\end{align*}
\end{proposition}

With the results above, we gather all the estimates for
  connection coefficients and derivative of conformal factor:
\begin{proposition}[\textbf{\em estimates for the~$L^\infty$, $L^4$
    and~$L^2$ norms of the connection coefficients and the derivatives
    of the metric}]
\label{PropositionSummary:EstimatesConnectionDerivativeofConfFactor}
Assume 
\begin{align*}
\Delta_{\Phi}<\infty, \qquad  \Delta_{\phi}<\infty,
\end{align*}
in the truncated diamond~$\mathcal{D}_{u,v_\bullet}^{\,t}$. Then there
exists
\begin{align*}
\varepsilon_\star=\varepsilon_\star(I,\Delta_{e_{\star},\Xi_{\star}},\Delta_{\Gamma_{\star}},
\Delta_{\Sigma_{\star}},\Delta_{\Phi_{\star}},
\Delta_{\phi_{\star}},\Delta_{\Phi},\Delta_{\phi})
\end{align*}
such that for~$\varepsilon\leq\varepsilon_\star$, we have
\begin{align*}
  &\sup_{u,v}\sup_{\Gamma\in\{\mu,\lambda,\alpha,\beta,\epsilon\}}
  \left(\sum_{i=0}^1||\nablasl^i\Gamma||_{L^{\infty}
    (\mathcal{S}_{u,v})}+\sum_{i=0}^2||\nablasl^i\Gamma||_{L^4(\mathcal{S}_{u,v})}
  +\sum_{i=0}^3||\nablasl^i\Gamma||_{L^2(\mathcal{S}_{u,v})}\right)\\
  &\qquad\leq
  C(\Delta_{e_{\star},\Xi_{\star}},\Delta_{\Gamma_\star}), \\
  &\sup_{u,v}\left(||\{\rho,\sigma\}||_{L^{\infty}(\mathcal{S}_{u,v})}
  +\sum_{i=0}^1||\nablasl^i\{\rho,\sigma\}||_{L^4(\mathcal{S}_{u,v})}
  +\sum_{i=0}^2||\nablasl^i\{\rho,\sigma\}||_{L^2(\mathcal{S}_{u,v})} \right)
  \leq C(\Delta_{e_{\star},\Xi_{\star}},\Delta_{\Gamma_\star}), \\
  &\sup_{u,v}\left(||\{\tau,\chi\}||_{L^{\infty}(\mathcal{S}_{u,v})}
  +\sum_{i=0}^1||\nablasl^i\{\tau,\chi\}||_{L^4(\mathcal{S}_{u,v})}
  +\sum_{i=0}^2||\nablasl^i\{\tau,\chi\}||_{L^2(\mathcal{S}_{u,v})} \right)\\
  &\qquad\qquad
  \leq C(I,\Delta_{e_{\star},\Xi_{\star}},\Delta_{\Gamma_\star},\Delta_{\Sigma_\star},
  \Delta_{\Phi_\star},\Delta_{\phi_\star}),\\
  &\sup_{u,v}\left(||\nablasl\{\rho,\sigma\}||_{L^{\infty}(\mathcal{S}_{u,v})}
  +||\nablasl^2\{\rho,\sigma\}||_{L^4(\mathcal{S}_{u,v})}
  +||\nablasl^3\{\rho,\sigma\}||_{L^2(\mathcal{S}_{u,v})} \right)\\
  &\qquad\qquad
  \leq C(I,\Delta_{e_{\star},\Xi_{\star}},\Delta_{\Gamma_\star},
  \Delta_{\Sigma_\star},\Delta_{\Phi},\Delta_{\phi}),  \\
  &\sup_{u,v}\left(||\nablasl\{\tau,\chi\}||_{L^{\infty}(\mathcal{S}_{u,v})}
  +||\nablasl^2\{\tau,\chi\}||_{L^4(\mathcal{S}_{u,v})}
  +||\nablasl^3\{\tau,\chi\}||_{L^2(\mathcal{S}_{u,v})} \right)\\
  &\qquad\qquad
  \leq C(I,\Delta_{e_{\star},\Xi_{\star}},\Delta_{\Gamma_\star},
  \Delta_{\Sigma_\star},\Delta_{\Phi},\Delta_{\phi}), \\
  &\sup_{u,v}\sup_{j=1,...,4}\left(\sum_{i=0}^1
  ||\nablasl^i\Sigma_j||_{L^{\infty}(\mathcal{S}_{u,v})}
  +\sum_{i=0}^2||\nablasl^i\Sigma_j||_{L^4(\mathcal{S}_{u,v})}
  +\sum_{i=0}^3||\nablasl^i\Sigma_j||_{L^2(\mathcal{S}_{u,v})}\right)
  \leq C(\Delta_{e_{\star},\Xi_{\star}},\Delta_{\Sigma_\star}), 
\end{align*}
in the truncated diamond~$\mathcal{D}_{u,v_\bullet}^{\,t}$.
\end{proposition}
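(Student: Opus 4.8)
The plan is to obtain this proposition as a consolidation of the two results immediately preceding it, upgraded to the stated third-order $L^2$ control; the argument mirrors the improved-estimates analysis culminating in Proposition~\ref{PaperII-Proposition:CEFEImprovedEstimates} of Paper~II, and the principal task is simply to check that nothing in that analysis used one of the initial hypersurfaces being the conformal boundary. First I would invoke Proposition~\ref{PropositionSummary:EstimatesConnection}, which already furnishes the $L^\infty$, $L^4$ and $L^2$ norms up to second order for all connection coefficients and the $\Sigma_j$, with constants depending on the sphere-curvature norms $\Delta_{\Phi}(\mathcal{S})$ and $\Delta_{\phi}(\mathcal{S})$. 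Since the curvature-boundedness proposition stated just above shows that these sphere norms are finite and controlled by the primitive data together with $\Delta_{\Phi}$ and $\Delta_{\phi}$, those dependencies can be absorbed; this is precisely why the lower-order bounds for the ``good'' coefficients $\{\mu,\lambda,\alpha,\beta,\epsilon,\rho,\sigma\}$ close on $\Delta_{e_\star,\Xi_\star}$ and $\Delta_{\Gamma_\star}$ alone.

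The genuinely new content lies in the third-order estimates. Each of the good coefficients satisfies transport equations along both $\bml$ (the $D$, or long, direction) and $\bmn$ (the $\Delta$, or short, direction), and I would integrate along the short direction $\bmn$ from $\mathcal{N}_\star$, so that the accumulated transport integral is proportional to the short length and can be made a contraction by choosing $\varepsilon$ small. To reach the $\nablasl^3$-level in $L^2$ I would commute the angular operator $\nablasl$ three times through the relevant transport equation, using the commutators on the spheres $\mathcal{S}_{u,v}$ to reorganise mixed derivatives; the equation for $\nablasl^3\Gamma$ is then integrated and closed by a Gr\"onwall argument. The quadratic source terms are handled by distributing derivatives, estimating the lower-order factors in $L^\infty$ or $L^4$ (via Proposition~\ref{PropositionSummary:EstimatesConnection} and the Gagliardo--Nirenberg inequality on $\mathcal{S}_{u,v}$) and the top-order factor in $L^2$, absorbing the latter by smallness of $\varepsilon$. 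The absence of $I$ from these constants reflects that only the short direction is integrated.

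The coefficients $\tau$ and $\chi$ require separate handling, since they satisfy only a long-direction ($D$) equation---for $\chi$ this is~\eqref{EqDchi}---so the smallness of $\varepsilon$ is unavailable and integration proceeds along $\bml$ over the full interval $I=[0,v_\bullet]$. Here I would exploit the structural fact emphasised in the text that these equations carry no genuinely quadratic principal part and may be treated as linear: integrating~\eqref{EqDchi} along $\bml$ from $\mathcal{N}'_\star$ (where $\chi=0$ by the gauge $Q=1$ of Gauge Choice~\ref{Assumption:Stewarts_Frame}) and applying Gr\"onwall yields control whose source is the curvature term $\Psi_2$ together with angular connection terms. The linearity is what prevents Gr\"onwall blow-up over this fixed but possibly large interval, and explains both the dependence on $I$ and the appearance of $\Delta_{\Sigma_\star},\Delta_{\Phi_\star},\Delta_{\phi_\star}$ (and, at top order, the cone norms $\Delta_{\Phi},\Delta_{\phi}$) in the corresponding constants. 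The $\Sigma_j$ are obtained identically from the conformal equations~\eqref{CFE1}--\eqref{CFE2}, which decouple from the curvature at the relevant order and hence close on $\Delta_{e_\star,\Xi_\star}$ and $\Delta_{\Sigma_\star}$.

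I expect the main obstacle to be the top-order $\nablasl^3$ estimate in $L^2$, where commuting three angular derivatives through a transport equation produces a term in which $\nablasl^3$ falls on the curvature at the \emph{same} differential order as the unknown, threatening a loss of derivatives. Closing the estimate rests on two facts already secured: that this top-order curvature term enters \emph{linearly}, so it can be moved to the right-hand side and bounded by the finite cone norms $\Delta_{\Phi}$ and $\Delta_{\phi}$; and that every genuinely quadratic interaction involves only factors of order at most two, controllable by the lower-order bounds. Once $\varepsilon$ is chosen small enough that the short-direction transport integrals are contractions, a standard Gr\"onwall closure delivers all the stated bounds with the indicated dependencies.
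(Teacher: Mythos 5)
Your overall architecture is the paper's: the proposition is obtained by gathering Proposition~\ref{PropositionSummary:EstimatesConnection} with the curvature-boundedness proposition and then upgrading to third order by commuting $\nablasl$ through the transport equations, with the estimates imported from Propositions~\ref{PaperII-Proposition:CEFEFirstEstimateConnectionSigma}--\ref{PaperII-Proposition:CEFEImprovedEstimates} of Paper~II, and with $\tau$, $\chi$ handled along the long direction via the effectively linear $D$-equations such as~\eqref{EqDchi}. However, there is a genuine flaw in your assignment of directions at top order: you propose to estimate $\nablasl^3\rho$ and $\nablasl^3\sigma$ by integrating their $\Delta$-equations in the short direction, claiming $I$-free constants. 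This step fails. In Stewart's gauge the only $\Delta$-transport equations for $\rho$ and $\sigma$ are the NP Ricci identities of the schematic form $\Delta\rho-\bar\delta\tau=\Gamma\Gamma+\Xi\phi_2+\ldots$ and $\Delta\sigma-\delta\tau=\Gamma\Gamma+\ldots$, so commuting three angular derivatives produces $\nablasl^4\tau$ --- one derivative beyond anything the scheme ever controls, since $\tau$ is itself only estimated up to $\nablasl^3$, and that through its long-direction $D$-equation. This is a loss of derivatives on a \emph{connection coefficient}, not on the curvature, so the repair you offer in your final paragraph (moving the linear top-order curvature term to the right-hand side and bounding it by the cone norms $\Delta_{\Phi}$, $\Delta_{\phi}$) does not address it. Note that the same mechanism is harmless for $i\leq 2$, where $\nablasl^i$ of these equations involves at most $\nablasl^3\tau$; this is why the second-order bounds for $\{\rho,\sigma\}$ do close with the $I$-free constant $C(\Delta_{e_{\star},\Xi_{\star}},\Delta_{\Gamma_\star})$, but the third-order ones do not.

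The correct route --- followed in Proposition~\ref{PaperII-Proposition:CEFEImprovedEstimates} of Paper~II and stated explicitly by the present paper in the Cauchy-stability section, where it is recalled that the third-order derivatives of $\tau$, $\chi$ \emph{and} $\rho$, $\sigma$ require the associated $D$-equations --- is to estimate $\nablasl^3\{\rho,\sigma\}$, like $\nablasl^3\{\tau,\chi\}$, from the long-direction equations $D\rho=\rho^2+\sigma\bar\sigma+(\epsilon+\bar\epsilon)\rho+\Phi_{00}$ and $D\sigma=(\rho+\bar\rho)\sigma+(3\epsilon-\bar\epsilon)\sigma+\Xi\phi_0$, which contain no angular derivatives of other connection coefficients; the commuted curvature terms are then bounded via Cauchy--Schwarz by $\sqrt{|I|}$ times the fluxes on $\mathcal{N}_u$, i.e.\ by the cone norms. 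This is precisely why the statement you are proving lists the $\nablasl^3\{\rho,\sigma\}$ and $\nablasl^3\{\tau,\chi\}$ bounds with constants $C(I,\Delta_{e_{\star},\Xi_{\star}},\Delta_{\Gamma_\star},\Delta_{\Sigma_\star},\Delta_{\Phi},\Delta_{\phi})$ depending on $I$ and on $\Delta_{\Phi}$, $\Delta_{\phi}$, in contrast with the $I$-free constants for $\{\mu,\lambda,\alpha,\beta,\epsilon\}$; your claimed constants for third-order $\rho$, $\sigma$ contradict the proposition itself, which should have flagged the problem. The remainder of your plan is sound and consistent with the paper: the short-direction absorption by smallness of $\varepsilon$ for $\{\mu,\lambda,\alpha,\beta,\epsilon\}$ and $\Sigma_j$ works because their $\Delta$-equations involve only the components $\Phi_{12}$, $\Phi_{22}$, $\phi_3$, $\phi_4$ that $\Delta_{\Phi}$, $\Delta_{\phi}$ control on the short cones $\mathcal{N}'_v$, and the linear Gr\"onwall argument for $\tau$, $\chi$ from $\chi=0$ on $\mathcal{N}'_\star$ is exactly the paper's mechanism.
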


%%%%%%%%%%%%%%%%%%%%%%%%%%%%%%%%%%%%%%%%%%%%%%%%%%%%%%%%%%%%%
\subsection{The energy estimates for the curvature}
%%%%%%%%%%%%%%%%%%%%%%%%%%%%%%%%%%%%%%%%%%%%%%%%%%%%%%%%%%%%%

The next step in the bootstrap argument leading to the optimal local
existence result is to make use of the estimates provided by
Proposition~\ref{PropositionSummary:EstimatesConnection} to obtain
sharper energy estimates for the components of the Ricci and rescaled
Weyl curvature spinors. The hierarchical structure of the CEFE allows
to proceed with this estimation in a two-step process: first one looks
at the components of the Weyl tensor
---cf. Propositions~\ref{PaperII-Proposition:SecondMainEstimaterescaledWeylCurvature}
and~\ref{PaperII-Proposition:EstimatesDerivativesrescaledWeyl34} of
Paper~II. In the second step one estimates the components of the Ricci
tensor
---cf. Propositions~\ref{PaperII-Proposition:EstimatesDerivativesRiccigood}
and~\ref{PaperII-Proposition:EstimatesDerivativesRicci1222}. For both
the rescaled Weyl tensor and the Ricci tensor the analysis of most of
the components is straightforward. Only certain \emph{bad} components
require extra consideration ---the components~$\phi_3$ and~$\phi_4$ of
the Weyl tensor and the components~$\Phi_{12}$ and~$\Phi_{22}$ of the
Ricci tensor. The final result of this analysis is the following
proposition estimating the components of the curvature in terms of the
initial data. The key ingredient in this proposition is the assumption
that the curvature is bounded.

\begin{proposition} [\textbf{\em control of the components of the
        curvature in terms of the initial data}]
\label{Proposition:FinalEstimateRicciCurvature}
Suppose we are given a solution to the vacuum CEFE's in Stewart's
gauge arising from data for the CIVP satisfying
\begin{align*}
 \Delta_{e_{\star},\Xi_{\star}},\;\Delta_{\Gamma_\star}, \;\Delta_{\Sigma_\star},
  \;\Delta_{\Phi_\star}\;\Delta_{\phi_\star} <\infty,
\end{align*}
with the solution itself satisfying
\begin{align*}
 & \sup_{u,v}||\{\mu, \lambda, \alpha, \beta, \epsilon, \rho, \sigma,
  \tau, \chi,\Sigma_i\}||_{L^\infty(\mathcal{S}_{u,v})}< \infty\,,\quad
  \sup_{u,v}||\nablasl\{\mu, \lambda, \alpha, \beta, \epsilon, \rho,
  \sigma,\Sigma_i\}
  ||_{L^4(\mathcal{S}_{u,v})}<\infty\,,\\
&  \sup_{u,v}||\nablasl^2\{\mu,
  \lambda, \alpha, \beta, \epsilon, \rho, \sigma,
  \tau,\Sigma_i\}||_{L^2(\mathcal{S}_{u,v})}<\infty\,,\quad
  \sup_{u,v}||\nablasl^3\{\mu,\lambda,\alpha,\beta,\epsilon,\tau,\Sigma_i\}
  ||_{L^2(\mathcal{S}_{u,v})}<\infty\,,\\
 & \qquad\qquad\qquad\qquad\Delta_{\Phi}(\mathcal{S})<\infty\,, \quad
  \Delta_{\Phi}<\infty\,,\quad \Delta_{\phi}(\mathcal{S})<\infty\,, \quad
  \Delta_{\phi}<\infty,
\end{align*}
on some truncated causal
diamond~$\mathcal{D}_{u,v_\bullet}^{\,t}$. Then there
exists~$\varepsilon_\star=
\varepsilon_\star(I,\Delta_{e_{\star},\Xi_{\star}},\Delta_{\Gamma_\star},
\Delta_{\Sigma_\star},\Delta_{\Phi_\star},\Delta_{\phi_{\star}})$ such
that for~$\varepsilon_\star\leq\varepsilon$ we have
\begin{align*}
  &\Delta_{\Phi}<C_1(I,\Delta_{e_{\star},\Xi_{\star}},\Delta_{\Gamma_{\star}},
  \Delta_{\Sigma_{\star}},\Delta_{\Phi_{\star}},\Delta_{\phi_{\star}}),
  \\
  &\Delta_{\phi}\leq
  C_2(I,\Delta_{e_{\star},\Xi_{\star}},\Delta_{\Gamma_{\star}},
  \Delta_{\Sigma_{\star}},\Delta_{\Phi_{\star}},
  \Delta_{\phi_{\star}}).
\end{align*}
\end{proposition}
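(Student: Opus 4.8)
The plan is to carry out the energy-estimate step of the bootstrap argument, following the two-step hierarchical scheme dictated by the structure of the CEFE: first I would control the rescaled Weyl scalars $\bm{\phi}$ through the Bianchi identity~\eqref{CFE4}, and then feed these bounds into the estimates for the Ricci components $\bm{\Phi}$ obtained from equations~\eqref{CFE2}-\eqref{CFE3}. Throughout, the connection coefficients and the derivatives of the conformal factor are treated as known quantities, their $L^\infty$, $L^4$ and $L^2$ norms being already controlled by Proposition~\ref{PropositionSummary:EstimatesConnectionDerivativeofConfFactor} for $\varepsilon\leq\varepsilon_\star$.

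The core mechanism is the following. Written in the NP formalism, equation~\eqref{CFE4} yields for each rescaled Weyl scalar a pair of transport equations --- a $D$-equation and a $\Delta$-equation --- whose principal parts couple $\phi_k$ to the angular derivatives $\nablasl\phi_{k\pm 1}$. Since the reduced system is symmetric hyperbolic with respect to $\tau^a=l^a+n^a$, I would pair each equation with the conjugate component, take the real part, and integrate over the truncated causal diamond $\mathcal{D}^{\,t}_{u,v_\bullet}$. Integrating by parts on the spheres $\mathcal{S}_{u,v}$, using the control on $\varpi=\beta-\bar\alpha$ from Proposition~\ref{PropositionSummary:EstimatesConnectionDerivativeofConfFactor}, the angular principal terms combine into a total divergence, so that the left-hand side reduces to the flux integrals defining $\Delta_{\phi}$ on $\mathcal{N}_u^t$ and $\mathcal{N}'_v{}^t$, plus the initial contribution bounded by $\Delta_{\phi_\star}$. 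The remaining bulk terms are quadratic in the curvature and the connection coefficients; using the estimates of Proposition~\ref{PropositionSummary:EstimatesConnectionDerivativeofConfFactor} together with the Gagliardo--Nirenberg and Sobolev inequalities on $\mathcal{S}_{u,v}$, these are bounded by $\varepsilon$ times the energies plus lower-order data-controlled terms. Choosing $\varepsilon\leq\varepsilon_\star$ small enough absorbs the former into the left-hand side, and a Gr\"onwall argument in the long direction $v\in I$ closes the estimate for $\Delta_{\phi}$ with a constant depending only on $I$ and the starred data norms. The identical scheme applied to~\eqref{CFE2}-\eqref{CFE3}, now with the already-controlled $\bm{\phi}$ appearing as a source, yields the bound for $\Delta_{\Phi}$.

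The main obstacle, exactly as in Paper~II, is the treatment of the \emph{bad} components $\phi_3$, $\phi_4$ of the rescaled Weyl tensor and $\Phi_{12}$, $\Phi_{22}$ of the Ricci tensor. For these the naive pairing does not produce a clean divergence structure: one flux appears with a borderline coefficient that cannot be absorbed by smallness of $\varepsilon$ alone. I would handle them by grouping their transport equations so that the borderline angular terms cancel against one another after integration by parts, and by exploiting that these components are estimated along the short direction --- so that their integrals carry an explicit factor of $\varepsilon$ --- together with the extra structure supplied by the conformal factor and the gauge conditions $\Phi_{00}=0$ on $\mathcal{N}_\star$ and $\Phi_{22}=0$ on $\mathcal{N}'_\star$ of Lemma~\ref{Lemma2}. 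The delicate point is to verify that all such borderline terms can still be absorbed for $\varepsilon$ sufficiently small and that no term forces the constant to depend on $\varepsilon$; this is precisely where the optimality of the estimate in the short direction is used, and it is the step requiring the most careful bookkeeping.
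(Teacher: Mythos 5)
Your proposal follows essentially the same route as the paper, which itself defers the proof to the corresponding propositions of Paper~II: a two-step hierarchical energy estimate (rescaled Weyl scalars first via the Bianchi equation~\eqref{CFE4}, then the Ricci components with the already-bounded $\bm{\phi}$ entering as a source through $\Sigma_a d^a{}_{bcd}$), carried out by pairing the NP transport equations, integrating by parts over truncated causal diamonds so that only controlled fluxes appear, absorbing borderline terms for $\varepsilon$ small, closing with Gr\"onwall in the long direction, and handling the \emph{bad} components $\phi_3$, $\phi_4$, $\Phi_{12}$, $\Phi_{22}$ exactly by the grouping/short-direction-flux device the paper highlights. The only minor imprecisions — the $\bm{\Phi}$ transport equations come from~\eqref{CFE3} rather than~\eqref{CFE2}--\eqref{CFE3} (the latter governs the Friedrich scalar $s$), and a full write-up would explicitly commute the equations with $\nablasl$ up to third order to reach the $\sum_{i=0}^3$ norms in $\Delta_\phi$ and $\Delta_\Phi$ — do not alter the argument.
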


%%%%%%%%%%%%%%%%%%%%%%%%%%%%%%%%%%%%%%%%%%%%%%%%%%%%%%%%%%%%%
\subsection{Last slice argument}
%%%%%%%%%%%%%%%%%%%%%%%%%%%%%%%%%%%%%%%%%%%%%%%%%%%%%%%%%%%%%

The estimates discussed in the previous subsections can be used, in
turn, to show that the solution to the conformal Einstein field
equations exist on a rectangular domain of the form
\begin{align*}
\mathcal{D} \equiv \{ 0\leq u \leq \varepsilon, \; 0\leq v \leq v_\bullet\}, 
\end{align*}
with~$v_\bullet$ such that~$\Xi|_{v=v_{\bullet}}=0$. Accordingly, the
set~$\{ v=v_\bullet \}\cap \mathcal{D}$ corresponds to a portion of
future null infinity~$\mathscr{I}^+$. The strategy to show this result
is similar to the one used in Papers~I and~II and is based on a
\emph{last slice argument}. In this scheme one argues by contradiction
and assumes that the solution does not fill the whole
domain~$\mathcal{D}$. Accordingly, there must exist a hypersurface
(the last slice) which bounds the domain of existence of the
solution. The estimates constructed in the previous subsections allow
then to show that in this last slice the solution and its derivatives
are bounded so that it is possible to formulate a (standard) initial
value problem for the conformal Einstein field equations to show that
the solution extends beyond the last slice ---thus resulting in a
contradiction.

%%%%%%%%%%%%%%%%%%%%%%%%%%%%%%%%%%%%%%%%%%%%%%%%%%%%%%%%%%%%%
\begin{figure}[t]
\centering
\includegraphics[width=0.6\textwidth]{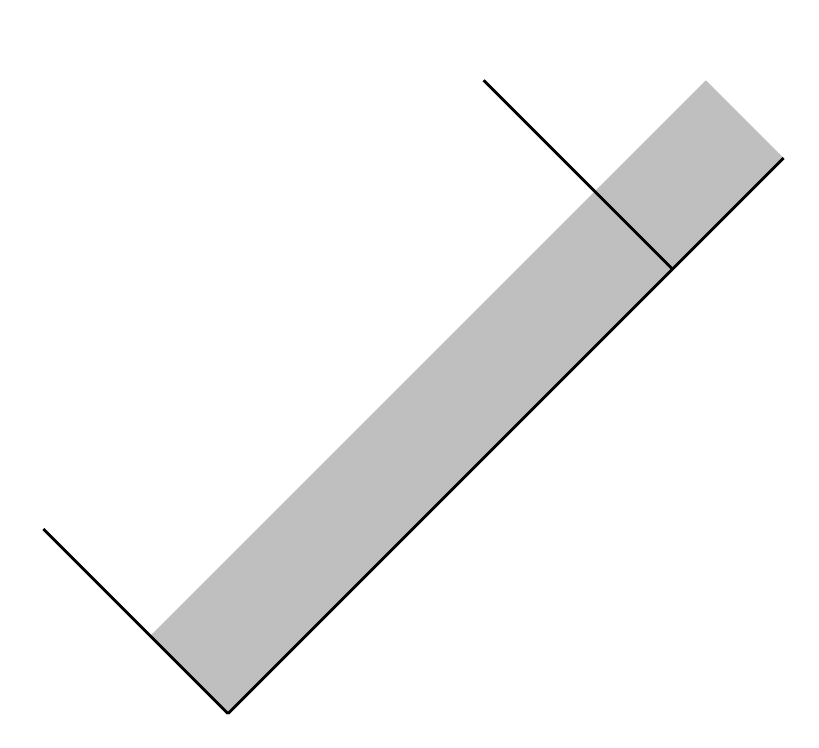}
\put(-95,200){$\mathscr{I}^+$}
\put(-68,153){$\varepsilon$}
\put(-125,100){$\mathcal{D}$}
\put(-50,170){$\mathcal{D}'$}
\put(-230,30){$\mathcal{N}'_\star$}
\put(-150,30){$\mathcal{N}_\star$}
\caption{Extension of the initial data on~$\mathcal{N}_\star$. By
  causality the choice of the extension of the data
  beyond~$\mathscr{I}^+$ does not influence the solution of the causal
  domain~$\mathcal{D}$.\label{Figure:ExtendedDataLastSlice}}
\end{figure}
%%%%%%%%%%%%%%%%%%%%%%%%%%%%%%%%%%%%%%%%%%%%%%%%%%%%%%%%%%%%%

As the workings of the last slice argument have been discussed in
detail in Paper~I ---see section~\ref{PaperI-Section:LastSlice} of
this reference--- here we focus on the necessary modifications. As the
main purpose of the present analysis is to ensure that one recovers a
portion of future null infinity, in order to ensure existence of the
solution to the CEFE on the domain~$\mathcal{D}$ one actually needs to
show existence in a slightly larger domain. This is because the
existence domains are given in terms of open sets. As the CEFE are
regular at the sets where~$\Xi=0$, one can consider an initial
hypersurface~$\mathcal{N}_\star$ which extends
beyond~$\mathscr{I}^+$. The basic initial data on~$\mathcal{N}_\star$
as described in Proposition~\ref{Lemma3} can be extended in an
arbitrary but controlled manner beyond the intersection of null
infinity with~$\mathscr{I}^+$ up to, say~$v_\bullet+\frac{1}{10}$, in
such a way that it coincides with the original data
for~$v\in[0,v_\bullet]$ ---see
Figure~\ref{Figure:ExtendedDataLastSlice}. In particular we require
that the extension is such that the norms~$\Delta_{e_\star,
  \Xi_\star}$, $\Delta_{\Gamma_\star}$, $\Delta_{\Sigma_\star}$,
$\Delta_{\Phi_\star}$ and~$\Delta_{\phi_\star}$ which have a
contribution along~$\mathcal{N}_\star$ are finite. Using this extended
data on~$\mathcal{N}_\star$ together with the data
on~$\mathcal{N}'_\star$ and $\mathcal{S}_\star$ one can compute the
full initial data set for the conformal evolution equations. The last
slice argument as discussed in Papers~I and~II can then be used to
ensure existence on
\begin{align*}
\mathcal{D}'\equiv \{ 0\leq u \leq \varepsilon, \; 0
\leq v \leq v_\bullet +\tfrac{1}{10}\} \supset \mathcal{D}.
\end{align*}
As a consequence of Lemma~\ref{Lemma:StructureConformalBoundary}, one
has that the set defined by the condition~$v=v_\bullet$ is a null
hypersurface and, accordingly, our domain of existence contains a
portion of~$\mathscr{I}^+$. Finally, observe that by causality the
solution on~$\mathcal{D}$ is independent of the choice of extended
data on~$\{ u=0, \, v\in(v_\bullet,v_\bullet +\tfrac{1}{10} \}$
---that is, $\mathscr{I}^+$ is the Cauchy horizon of the data on~$\{v=0, \;
u\in[0,\varepsilon\}\cup \{ u=0, \; v\in[0,v_\bullet] \}$.
  
%%%%%%%%%%%%%%%%%%%%%%%%%%%%%%%%%%%%%%%%%%%%%%%%%%%%%%%%%%%%%%%%%%%%%%%%%%%%%%%
\section{Application: stability of the Minkowski spacetime from a
  Cauchy-characteristic initial value problem}
%%%%%%%%%%%%%%%%%%%%%%%%%%%%%%%%%%%%%%%%%%%%%%%%%%%%%%%%%%%%%%%%%%%%%%%%%%%%%%%

In this section we discuss an application of the local extendibility
problem of null infinity to the stability of the Minkowski spacetime
in a Cauchy-characteristic setting. This argument relies crucially on
Friedrich's semi-global existence and stability result of the Minkowski
spacetime from hyperboloidal data ---see~\cite{Fri86a}; see
also~\cite{LueVal09}. The strategy in the proof is to use the local
extendibility result of null infinity proven earlier in this article
together with the local existence result for the standard Cauchy
problem for the conformal Einstein field equations to obtain a
development of the Cauchy-characteristic initial data on which one can
pass a hyperboloidal hypersurface. If the Cauchy-characteristic
initial data is suitably and sufficiently close to data for the
Minkowski spacetime then one will be in a situation in which the
semi-global stability of the Minkowski spacetime can be used.

%%%%%%%%%%%%%%%%%%%%%%%%%%%%%%%%%%%%%%%%%%%%%%%%%%%%%%%%%%%%%
\begin{figure}[t]
\centering
\includegraphics[width=0.7\textwidth]{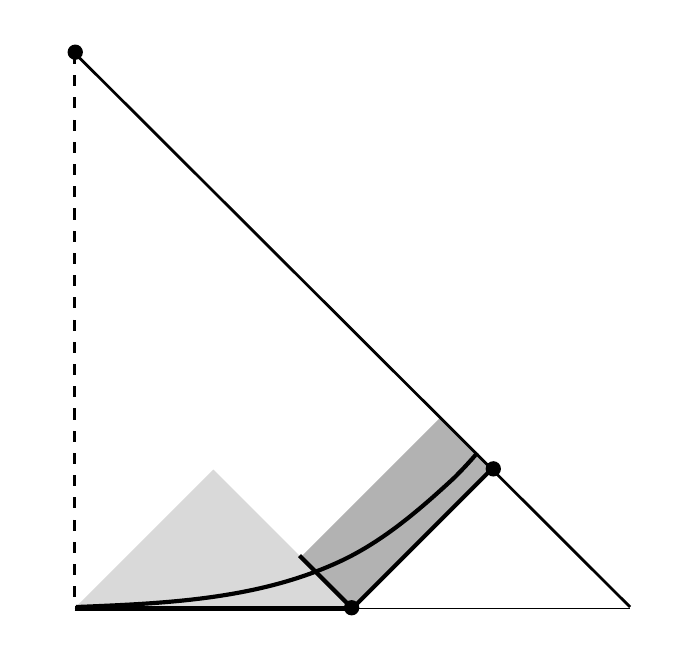}
\put(-128,140){$\mathscr{I}^+$}
\put(-115,82){$\mathcal{H}_\star$}
\put(-113,50){$\mathcal{N}_\star$}
\put(-200,15){$\mathcal{K}_\star$}
\put(-152,37){$\mathcal{N}_\star'$}
\put(-220,60){$D^+(\mathcal{K}_\star)$}
\put(-220,140){$D^+(\mathcal{H}_\star)$}
\put(-185,70){$D^+(\mathcal{N}_\star'\cup\mathcal{N}_\star)$}
\caption{Schematic depiction of the set-up for the proof of the
  stability of the Minkowski spacetime in a Cauchy-characteristic
  setting. Cauchy data for the conformal Einstein field equations is
  provided in the compact spacelike domain~$\mathcal{K}_\star$ while
  characteristic data consistent with Lemma~\ref{Lemma3} is provided
  on the null hypersurface~$\mathcal{N}_\star$. This null hypersurface
  intersects null infinity,~$\mathscr{I}^+$. The
  development~$D^+(\mathcal{K}_\star)$ (light-grey region) gives rise
  to characteristic data on the Cauchy horizon~$\mathcal{N}'_\star =
  H(\mathcal{K}_\star)$. The theory of the local extendibility of null
  infinity developed earlier in the article ensures the existence of a
  solution to the conformal equations in the causal
  diamond~$D(\mathcal{N}_\star\cup\mathcal{N}_\star')$. Crucially, the
  boundary of this causal diamond includes a portion
  of~$\mathscr{I}^+$. On~$D^+(\mathcal{K}_\star)\cup
  D(\mathcal{N}_\star\cup\mathcal{N}_\star')$ one considers a
  hyperboloidal hypersurface~$\mathcal{H}_\star$. If the
  Cauchy-characteristic initial data is assumed to be suitably close
  to data for the the Minkowski spacetime, then the hyperboloidal data
  induced on~$\mathcal{H}_\star$ will also be suitably close to
  Minkowski hyperboloidal data. Friedrich's semi-global existence
  result then ensures that~$D(\mathcal{H}_\star)$ is geodesically
  complete and has the same global structure as the Minkowski
  spacetime ---in particular, the generators of~$\mathscr{I}^+$
  intersect a point~$i^+$, spatial
  infinity. \label{Fig:StabilityMinkowski}}
\end{figure}
%%%%%%%%%%%%%%%%%%%%%%%%%%%%%%%%%%%%%%%%%%%%%%%%%%%%%%%%%%%%%

%%%%%%%%%%%%%%%%%%%%%%%%%%%%%%%%%%%%%%%%%%%%%%%%%%%%%%%%%%%%%
\subsection{Set-up}
\label{MinkowskiSetUp}
%%%%%%%%%%%%%%%%%%%%%%%%%%%%%%%%%%%%%%%%%%%%%%%%%%%%%%%%%%%%%

In the Cauchy-characteristic initial value problem it is assumed one
is provided with standard Cauchy initial data for the Einstein field
equations in a compact domain~$\mathcal{K}_\star$ of a spacelike
hypersurface~$\mathcal{S}_\star$. In the following it is assumed
that~$\mathcal{K}_\star= \mathcal{B}_{r_\star}$
with~$\mathcal{B}_{r_\star}$ a solid $3$-dimensional ball of
radius~$r_\star$ as measured by the metric $\bmh$ of
$\mathcal{S}_\star$ ---in particular,~$\p \mathcal{B}_{r_\star}
\approx \mathbb{S}^2$. Moreover, it is assumed that the hypersurface
$\mathcal{S}_\star$ is intersected at~$\p \mathcal{B}_{r_\star}$
by a null hypersurface~$\mathcal{N}_\star$ intersecting future null
infinity on a cut~$\mathcal{C}_\star \approx \mathbb{S}^2$. A sketch
of the the set-up is given in Figure~\ref{Fig:StabilityMinkowski}.

\smallskip
The initial data on~$\mathcal{K}_\star$ is assumed to satisfy the
Einstein constraint equations. These imply, in turn, a solution to the
constraints implied by the conformal Einstein field equations
on~$\mathcal{K}_\star$. In the following, it is assumed that the
initial data for the conformal Einstein equations
on~$\mathcal{K}_\star$, to be denoted by~$\mathbf{u}_\star$, differs
from standard (time symmetric) Cauchy data for the Minkowski
spacetime, to be denoted by~$\mathring{\mathbf{u}}_\star$ by at
most~$\delta>0$ in the standard Sobolev norm of order~$s\geq 4$
with $s$ sufficiently large. That is,
\begin{align*}
\parallel \mathbf{u}_\star
-\mathring{\mathbf{u}}_\star\parallel_{s,\mathcal{K}_\star} \leq \delta
\end{align*}
It follows from the theory of symmetric hyperbolic systems that the
development of this data will be of class~$C^{s-2}$ over its future
domain of dependence~$D^+(\mathcal{K}_\star)$ ---see
e.g.~\cite{Fri86b,CFEBook}. In particular if the initial data has
finite~$\parallel \;\;\parallel_{s,\mathcal{K}_\star}$-norm for
every~$s\geq 4$, then the solution is of class~$C^\infty$
over~$D^+(\mathcal{K}_\star)$ ---this will be assumed in the following
for simplicity of presentation.  On~$\p\mathcal{B}_{r_\star}$ it is
assumed that~$\mathbf{u}_\star$ matches smoothly with characteristic
data initial data on~$\mathcal{N}_\star$ ---in the following this data
will be denoted by~$\mathbf{v}_\star$. Similarly, one also considers
characteristic data $\mathring{\mathbf{v}}_\star$ for the Minkowski
spacetime satisfying the assumptions and gauge conditions of
Lemma~\ref{Lemma3} ---an explicit construction of this data of the
Minkowski spacetime is given in
Appendix~\ref{Appendix:ConformalMinkowskiStewartGauge}.

Now, the improved existence result for the characteristic problem
contained in Theorem~\ref{Theorem:LocalExtensionNullInfinity} does not
directly provide a statement of Cauchy stability which could be used
in the analysis of the non-linear stability of the Minkowski
spacetime. The reason for this is the presence of the ``number $1$''
in the definitions of the quantities $\Delta_{e_\star,\Xi_\star}$,
$\Delta_{\Gamma_\star}$, $\Delta_{\Phi_\star}$
and~$\Delta_{\phi_\star}$ introduced in
Subsection~\ref{Subsection:NormsAndMainResult}. This ``$1$'' was
originally introduced in the definition of the analogue quantities
quantities in \cite{Luk12} as a safety valve to ensure that the
arguments run through even in the case of trivial data. This feature
has the consequence that the quantities~$\Delta_{e_\star,\Xi_\star}$,
$\Delta_{\Gamma_\star}$, $\Delta_{\Phi_\star}$
and~$\Delta_{\phi_\star}$ are, strictly speaking, not norms. As it
will be shown below, the number ``$1$'' can be removed from the
definition if one has more information about the type of solution one
wants to construct ---as, for example, closeness to the Minkowski
spacetime.

%%%%%%%%%%%%%%%%%%%%%%%%%%%%%%%%%%%%%%%%%%%%%%%%%%%%%%%%%%%%%
\subsection{Cauchy stability for the characteristic initial value problem}
%%%%%%%%%%%%%%%%%%%%%%%%%%%%%%%%%%%%%%%%%%%%%%%%%%%%%%%%%%%%%

In this section it is discussed how to show that if the solution to
the CIVP on the causal diamond of
Theorem~\ref{Theorem:LocalExtensionNullInfinity} arises from
characteristic data~$\mathring{\mathbf{v}}_\star$ for the Minkowski
spacetime, then the solution on the existence diamond is also suitably
close to Minkowski data. The precise notion of \emph{closeness}
follows naturally from the strategy of the proof followed in this
article for our main theorem.

Start by recalling that as a consequence of
Theorem~\ref{Theorem:LocalExtensionNullInfinity} we have already
proved the existence of solutions to the CEFE for given arbitrary data
on~$\mathcal{N}_\star$ and~$\mathcal{N}'_\star$. As in the previous
subsection assume that on the existence diamond one has two
solutions~$\mathbf{v}$ and~ $\mathring{\mathbf{v}}$ ---the later
corresponding to the Minkowski spacetime as give, say, in
Appendix~\ref{Appendix:ConformalMinkowskiStewartGauge}. In the
following we use the quantities~$\xi_{\bme}$, $\xi_{\Xi}$,
$\xi_{\Gamma}$, $\xi_{\Sigma}$, $\xi_{\phi}$ and~$\xi_{\Phi}$ encode
the difference of quantities field unknowns between the
\emph{perturbed} and the Minkowski spacetime. For
example~$\xi_{\mu}\equiv\mu-\mathring{\mu}$ where $\mathring\mu$ gives
the value of the NP spin connection coefficient on the Minkowski
spacetime. For these \emph{difference quantities} one can define true
norms~$\Delta_{\xi_{\bme_{\star},\Xi_{\star}}}$,
$\Delta_{\xi_{\Gamma_{\star}}}$, $\Delta_{\xi_{\Sigma_{\star}}}$,
$\Delta_{\xi_{\Phi_{\star}}}$ and $\Delta_{\xi_{\phi_{\star}}}$ and
assume such norms are controlled by a small constant $\delta$. These
norms are defined in analogous manner to the quantities in
Section~\ref{Subsection:NormsAndMainResult} but without the
``number~$1$'' and with the understanding that the derivatives
appearing in them are operators on the perturbed spacetime. For
example, the norm for the initial value of the differences between
connection coefficients is given by
\begin{align*}
 \Delta_{\xi_{\Gamma_{\star}}}\equiv\sup_{\mathcal{S}_{u,v}\subset\mathcal{N}_\star,\mathcal{N}'_\star}
 \sup_{\Gamma\in\{\mu,\lambda,\rho,\sigma,\alpha,\beta,\tau,\epsilon\}}
 \max\{\sum_{i=0}^1||\nablasl^i\xi_{\Gamma}||_{L^{\infty}(\mathcal{S}_{u,v})},\sum_{i=0}^2
 ||\nablasl^i\xi_{\Gamma}||_{L^4(\mathcal{S}_{u,v})},
 \sum_{i=0}^3
 ||\nablasl^i\xi_{\Gamma}||_{L^2(\mathcal{S}_{u,v})}\}.
\end{align*}
Similarly, the norm for the initial value of the difference of the
components of the Ricci curvature given is given by
\begin{align*}
  &\Delta_{\xi_{\Phi_{\star}}}\equiv
    \sup_{\mathcal{S}_{u,v}\subset\mathcal{N}_\star,\mathcal{N}'_\star}
    \sup_{\Phi\in\{\Phi_{00},\Phi_{01},\Phi_{02},\Phi_{11},\Phi_{12}\}}
    \max\{\sum_{i=0}^1||\nablasl^i\xi_{\Phi}||_{L^4(\mathcal{S}_{u,v})},
    \sum_{i=0}^2||\nablasl^i\xi_{\Phi}||_{L^2(\mathcal{S}_{u,v})}\} \\
  &+\sum_{i=0}^3\sup_{\Phi\in\{\Phi_{00},\Phi_{01},\Phi_{02},\Phi_{11},\Phi_{12}\}}
    ||\nablasl^i\xi_{\Phi}||_{L^2(\mathcal{N}_\star)}
    +\sup_{\Phi\in\{\Phi_{01},\Phi_{02},\Phi_{11},\Phi_{12},\Phi_{22}\}}
    ||\nablasl^i\xi_{\Phi}||_{L^2(\mathcal{N}'_\star)}.
\end{align*}
Equations for the \emph{difference fields} can be readily computed by
subtraction of the relevant evolution equations. The structure of the
resulting equations resembles those of the CEFE with additional terms
corresponding to the products between the background (i.e. Minkowski)
and difference terms. For example, from the structure equation
\begin{align*}
\Delta\lambda=-2\mu\lambda-\Xi\phi_4,
\end{align*}
 one obtains that 
\begin{align*}
  \Delta\xi_{\lambda}\equiv\Delta(\lambda-\mathring{\lambda})=
  -2\xi_{\mu}\xi_{\lambda}-\xi_{\Xi}\xi_{\phi_4} -2\mathring{\mu}\xi_{\lambda}
  -2\mathring{\lambda}\xi_{\mu}-\mathring{\Xi}\xi_{\phi_4}-\mathring{\phi}_4
  \xi_{\Xi}-\frac{\mathring{\Delta}\mathring{\lambda}}{\mathring{Q}}\xi_Q,
\end{align*}
where~$\mathring{\phi}_4$ denotes the component of rescaled Weyl
spinors on the Minkowski spacetime (zero!) and~$\mathring{\Delta}$ is
the derivative along~$\bmn$ on Minkowski. More generally, writing the
structure equations in schematic form as
\begin{align*}
&D\Gamma-\delta\Gamma=\Gamma\Gamma+\Xi\phi+\Phi,\\
&\Delta\Gamma-\delta\Gamma=\Gamma\Gamma+\Xi\phi+\Phi, 
\end{align*}
it follows that the associated \emph{difference equations} are of the
form
\begin{align*}
D\xi_\Gamma-\delta\xi_\Gamma=&\xi_\Gamma\xi_\Gamma+\xi_\Xi\xi_\phi+\xi_\Phi \\
&+\mathring{\Gamma}\xi_\Gamma+\mathring{\Xi}\xi_\phi+\mathring{\phi}\xi_\Xi+\xi_{P^{\mathcal{A}}}
\partial_{\mathcal{A}}\mathring{\Gamma}-\xi_{C^{\mathcal{A}}}\partial_{\mathcal{A}}\mathring{\Gamma} ,  \\
\Delta\xi_\Gamma-\delta\xi_\Gamma=&\xi_\Gamma\xi_\Gamma+\xi_\Xi\xi_\phi+\xi_\Phi \\
&+\mathring{\Gamma}\xi_\Gamma+\mathring{\Xi}\xi_\phi+\mathring{\phi}\xi_\Xi+\xi_{P^{\mathcal{A}}}
\partial_{\mathcal{A}}\mathring{\Gamma}-\frac{\mathring{\Delta}\mathring{\lambda}}{\mathring{Q}}\xi_Q.
\end{align*}
Notice that is the background solution corresponds to the Minkowski
spacetime then the Weyl curvature terms $\mathring{\phi}$ actually
vanish. From the general structure of these equations it follows that
it is possible to obtain estimates for the difference quantities
associated to the spin connection coefficients making use of an
argument analogous to that used for the actual connection
coefficients. For~$\xi_\Sigma$ and most of the
differences~$\xi_\Gamma$ one can make use of their $\Delta$-equations
construct the required estimates. It follows that as long as the short
range~$\varepsilon$ is sufficiently small, their norms can be bounded
by 3 times the size of the initial data. 

The analysis of the third order derivatives of~$\xi_\tau$, $\xi_\chi$
and~$\xi_\rho$ $\xi_\sigma$ requires the use of the
associated~$D$-equations and, thus, require integration along the long
direction so as to obtain a Gr\"onwall-type inequality of the form
\begin{align*}
||f||_{L^p(\mathcal{S}_{u,v})}\leq C(I,\mathcal{O})
\left(||f||_{L^p(\mathcal{S}_{u,0})}+\int_0^v
||Df||_{L^p(\mathcal{S}_{u,v'})}\mathrm{d}v' \right),
\end{align*}
where~$\mathcal{O}$ denote a constant related to the spin connection
coefficient~$\rho$ on the perturbed spacetime. The first term on the
right hand side corresponds to initial data for the differences~$\xi$
and, thus, is bounded, by assumption, by~$\delta$. The second term can
be estimated using the~$D$-equation; each term in this equation can be
bounded by a constant related to the value of the field on the
Minkowski spacetime times~$\delta$. It follows then that the norm of
these particular difference quantities in the solution rectangle can
be bounded by~$C\delta$ where~$C$ is a constant related to the data
for the background and perturbed spacetimes. The evolution equations
for the difference fields in the short direction can be analysed in
analogous manner to what is done for the main
Theorem~\ref{Theorem:LocalExtensionNullInfinity} ---in this case the
argument requires a bootstrap assumption.

Finally, an analogous strategy can be adopted, \emph{mutatis mutandi},
for the differences fields associated to the components of the Ricci
and rescaled Weyl curvature tensors. It is found that in the existence
diamond these difference fields are bounded by $C\delta$. Proceeding
in this way it is possible to show that in the existence diamond the
\emph{norms} $\Delta_{\xi_{\bme,\Xi}}$, $\Delta_{\xi_{\Gamma}}$,
$\Delta_{\xi_{\Sigma}}$, $\Delta_{\xi_{\Phi}}$
and~$\Delta_{\xi_{\phi}}$ can be controlled by~$C\delta$ where $C$
denotes again a constant related to the initial data. \emph{This
  observation is a statement of Cauchy stability for the development of
  characteristic initial data which is close to data for the Minkowski
  spacetime}.

%%%%%%%%%%%%%%%%%%%%%%%%%%%%%%%%%%%%%%%%%%%%%%%%%%%%%%%
\subsection{Statement of the stability result}
%%%%%%%%%%%%%%%%%%%%%%%%%%%%%%%%%%%%%%%%%%%%%%%%%%%%%%%

Following the discussion from the previous subsection, it is assumed
that the characteristic initial data $\mathbf{v}_\star$ differs from
the Minkowski characteristic initial data $\mathring{\mathbf{v}}
_\star$ in the norms $\Delta_{\xi_{\bme_\star,\Xi_\star}}$,
$\Delta_{\xi_{\Gamma_\star}}$, $\Delta_{\xi_{\Sigma_\star}}$,
$\Delta_{\xi_{\Phi_\star}}$ and~$\Delta_{\xi_{\phi_\star}}$ by at most
$\delta>0$.  At the intersection of the initial Cauchy and
characteristic hypersurfaces, $\mathcal{K}_\star\cap\mathcal{N}_\star
= \p \mathcal{B}_{r_\star}$, it is assumed that the Cauchy data for
the conformal Einstein field equations and the characteristic data
on~$\mathcal{N}_\star$ match smoothly.

\smallskip
Given the above setting, one has the following stability result:

\begin{theorem}
Let~$\mathcal{K}_\star$ and~$\mathcal{N}_\star$ as in Subsection
\ref{MinkowskiSetUp}. Given smooth initial data for the Einstein field
equations on~$\mathcal{K}_\star$ matching smoothly
on~$\p\mathcal{K}_\star$ to smooth characteristic data
on~$\mathcal{N}_\star$ extending smoothly to null infinity. If the
data on~$\mathcal{K}_\star\cup\mathcal{N}_\star$ is suitably close to
Cauchy-characteristic initial data for the Minkowski solution (in the
natural norms for~$\mathcal{K}_\star$ and~$\mathcal{N}_\star$,
respectively) then the future
development~$D^+(\mathcal{K}_\star\cup\mathcal{N}_\star)$ is
geodesically complete and has the same global structure than the
Minkowski spacetime. In particular, it has a smooth conformal
extension with a conformal boundary~$\mathscr{I}^+$ with complete null
generators intersecting at a point~$i^+$ representing future timelike
infinity.
\end{theorem}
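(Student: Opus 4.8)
The plan is to construct the future development $D^+(\mathcal{K}_\star\cup\mathcal{N}_\star)$ as the union of three pieces---an ordinary Cauchy development, a characteristic diamond reaching $\mathscr{I}^+$, and a hyperboloidal development---and to close the argument by invoking Friedrich's semi-global hyperboloidal stability theorem on a slice extracted from the first two pieces. First I would solve the standard Cauchy problem for the CEFE with the data $\mathbf{u}_\star$ on the compact ball $\mathcal{K}_\star=\mathcal{B}_{r_\star}$. As the CEFE form a symmetric hyperbolic system that is regular at the conformal boundary, the classical local existence theory yields a smooth solution on $D^+(\mathcal{K}_\star)$, and the accompanying Cauchy stability estimate for symmetric hyperbolic systems ensures that, for $\delta$ small, this solution is $C^{s-2}$-close to the Minkowski development throughout $D^+(\mathcal{K}_\star)$. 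The future null boundary of this region is the Cauchy horizon $\mathcal{N}'_\star=H^+(\mathcal{K}_\star)$, which meets $\mathcal{N}_\star$ in the cut $\mathcal{S}_\star=\partial\mathcal{B}_{r_\star}$; restricting the solution to $\mathcal{N}'_\star$ furnishes reduced characteristic data there, consistent at $\mathcal{S}_\star$ with the prescribed data $\mathbf{v}_\star$ on $\mathcal{N}_\star$ by the smooth-matching hypothesis.

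With data in hand on $\mathcal{N}_\star\cup\mathcal{N}'_\star$, I would then apply Theorem~\ref{Theorem:LocalExtensionNullInfinity}. This produces a unique smooth solution on a thin causal diamond $\mathcal{D}=\{0\leq u\leq\varepsilon,\,0\leq v\leq v_\bullet\}$ whose long side lies along $\mathcal{N}_\star$ and whose face $\{v=v_\bullet\}$, fixed by $\Xi|_{v=v_\bullet}=0$, is a portion of future null infinity. Since the Minkowski characteristic data $\mathring{\mathbf{v}}_\star$ obeys the same hypotheses, the characteristic Cauchy stability argument of the preceding subsection applies verbatim to the difference quantities $\xi_{\bme}$, $\xi_\Xi$, $\xi_\Gamma$, $\xi_\Sigma$, $\xi_\Phi$, $\xi_\phi$, giving bounds of order $C\delta$ throughout $\mathcal{D}$, so the solution there is uniformly close to Minkowski.

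The region $D^+(\mathcal{K}_\star)\cup\mathcal{D}$ now contains a neighbourhood of $\mathcal{K}_\star\cup\mathcal{N}_\star$ that reaches the conformal boundary, and inside it one can pass a spacelike hyperboloidal hypersurface $\mathcal{H}_\star$ crossing $\mathscr{I}^+$ transversally on a cut. Combining the Cauchy stability estimate on $D^+(\mathcal{K}_\star)$ with the characteristic stability estimate on $\mathcal{D}$, the data induced on $\mathcal{H}_\star$ differs from Minkowski hyperboloidal data by at most $C\delta$ in the Sobolev norm required by Friedrich's theorem. Choosing $\delta$ small enough that this hyperboloidal data lies within the basin of Friedrich's semi-global existence and stability result, one concludes that $D^+(\mathcal{H}_\star)$ is future geodesically complete with the global structure of Minkowski: a smooth $\mathscr{I}^+$ with complete null generators converging to a regular point $i^+$. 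By uniqueness the three developments agree on overlaps, so their union covers $D^+(\mathcal{K}_\star\cup\mathcal{N}_\star)$ and yields the stated conclusion.

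The hard part will be the penultimate step: reconciling two \emph{different} stability statements---the Cauchy stability on $D^+(\mathcal{K}_\star)$, expressed in Sobolev norms adapted to the spacelike slice, and the characteristic stability on $\mathcal{D}$, expressed in the $L^2$/$L^4$/$L^\infty$ sphere norms of Section~\ref{Subsection:NormsAndMainResult}---into a single estimate for the data induced on $\mathcal{H}_\star$ in Friedrich's norm. One must check that $\mathcal{H}_\star$ can be chosen to lie entirely in the region where both families of estimates hold uniformly (in particular that $\varepsilon$ and $v_\bullet$ are compatible with a slice that genuinely reaches $\mathscr{I}^+$) and that the change of norms across the Cauchy-to-characteristic interface preserves the $C\delta$ bound. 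Once this compatibility is established, the appeal to Friedrich's theorem is immediate.
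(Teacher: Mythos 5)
Your proposal is correct and follows essentially the same route as the paper's own proof: the Cauchy development of $\mathcal{K}_\star$ inducing characteristic data on the horizon $\mathcal{N}'_\star=H^+(\mathcal{K}_\star)$, the local-extension Theorem~\ref{Theorem:LocalExtensionNullInfinity} together with the difference-quantity stability estimates giving the diamond reaching $\mathscr{I}^+$, a hyperboloidal slice $\mathcal{H}_\star$ in the union, and Friedrich's semi-global stability theorem to conclude. The norm-reconciliation step you flag as the hard part is precisely what the paper addresses (somewhat informally) by using smoothness of the solution on the compound domain and compactness of $\mathcal{H}_\star$ in the conformal picture to convert the sup-norm control into the $H^s(\mathcal{H}_\star)$ closeness required by Friedrich's result.
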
 

\begin{remark}
{\em For simplicity of presentation, the above theorem has been stated
  in the smooth (i.e.~$C^\infty$) class. However, a detailed analysis
  of the proof, and, in particular, of the argument leading to the
  result on the local extension of future null infinity,
  Theorem~\ref{Theorem:LocalExtensionNullInfinity} should lead to
  sharp statements in terms of the assumed regularity of the initial
  conditions and the resulting regularity of the solution. This
  analysis, is not be pursued presently.}
\end{remark}

%%%%%%%%%%%%%%%%%%%%%%%%%%%%%%%%%%%%%%%%%%%%%%%%%%%%%%%%%%%%%
\subsection{Proof}
%%%%%%%%%%%%%%%%%%%%%%%%%%%%%%%%%%%%%%%%%%%%%%%%%%%%%%%%%%%%%

Given the set-up described in the previous paragraphs, the argument to
show the stability of the Minkowski spacetime from
Cauchy-characteristic data proceeds as follows:

\begin{itemize}
\item[(i)] The Cauchy data given on~$\mathcal{K}_\star$ gives rise to
  a future development~$D^+(\mathcal{K}_\star)$. As this data does not
  cover the whole of a Cauchy hypersurface, the development has a
  Cauchy horizon~$H^+(\mathcal{K}_\star)$. The general theory of
  Lorentzian theory ensures that~$H^+(\mathcal{K}_\star)$ is a smooth
  null hypersurface~\cite{HawEll73}. Moreover, choosing the existence
  time of the development of~$\mathcal{K}_\star$ sufficiently small
  one ensure that the solution to the conformal Einstein field
  equations on~$D^+(\mathcal{K}_\star)$ extends smoothly
  to~$H^+(\mathcal{K}_\star)$. Now,
  setting~$\mathcal{N}_\star'=H^+(\mathcal{K}_\star)$ the restriction
  of the solution to the conformal Einstein field equations implies
  characteristic data on~$\mathcal{N}'_\star$ in the sense of
  Lemma~\ref{Lemma3} which is suitably close to characteristic data
  for the Minkowski spacetimes. Observe that for exact Minkowski data
  one has~$\phi_4=0$ on~$\mathcal{N}'_\star$.

\item[(ii)] The full characteristic data on~$\mathcal{N}_\star \cup
  \mathcal{N}'_\star$ give rise to
  development~$D^+(\mathcal{N}'_\star\cup\mathcal{N}_\star)$ in the
  form of a causal diamond which, by the theory developed in the
  previous sections of this article has one side coinciding with a
  portion of future null infinity~$\mathscr{I}^+$. A detailed
  inspection of the last slice argument shows, in particular, that if
  the data on the initial characteristic
  hypersurface~$\mathcal{N}_\star \cup \mathcal{N}'_\star$ is smooth,
  then the development~$D^+(\mathcal{N}'_\star\cup\mathcal{N}_\star)$
  is also smooth.

\item[(iii)] Now, the \emph{Cauchy-characteristic development}
  $D^+(\mathcal{K}_\star)\cup
  D^+(\mathcal{N}'_\star\cup\mathcal{N}_\star)$ contains a smooth
  hyperboloid~$\mathcal{H}_\star$. An explicit example can be given as
  follows: consider the (physical) Minkowski
  spacetime~$(\mathbb{R}^4,\tilde{\bmeta})$ in spherical
  coordinates~$(\tilde{t},\tilde{r},x^\mathcal{A})$. Then the upper
  sheet of the hypersurface given by the condition
\begin{align}
\frac{(\tilde{t}+a)^2}{a^2} -\frac{\tilde{r}^2}{1-a^2}=1
\label{Hyperboloid}
\end{align}
with~$a=\tfrac{1}{2}\sqrt{5}-\tfrac{1}{2}$ is an hyperboloidal
hypersurface passing through the origin. In particular, for large $r$
the hyperboloid asymptotes the null hypersurface described by the
condition
\begin{align*}
\tilde{t} -\tilde{r} =0.
\end{align*}
The conformal factor
\begin{align*}
\Theta =\cos t + \cos r
\end{align*}
where 
\begin{align*}
\tilde{t} = \frac{\sin t}{\cos t + \cos r}, \qquad \tilde{r}
=\frac{\sin r}{\cos t + \cos r},
\end{align*}
with~$t\in[-\pi,\pi]$, $r\in [0,\pi]$ gives an embedding of the
Minkowski spacetime into the Einstein cylinder. Now, as we are working
with a perturbation of the Minkowski spacetime, the
coordinates~$(t,r,x^{\mathcal{A}})$ can also be used, in a slight
abuse of notation, as coordinates of
the perturbed spacetime. The
upper sheet of the hypersurface described by the
condition~\eqref{Hyperboloid} expressed in terms of the
coordinates~$(t,r)$ given above is also an hyperboloid
on~$D^+(\mathcal{K}_\star)\cup
D^+(\mathcal{N}'_\star\cup\mathcal{N}_\star)$.

\item[(iv)] As the initial data on~$\mathcal{K}_\star \cup
  \mathcal{N}_\star$ is assumed to be suitably close to
  Cauchy-characteristic initial data for the Minkowski spacetime, then
  the solution to the conformal Einstein field equations
  on~$D^+(\mathcal{K}_\star)\cup D^+(\mathcal{N}_\star)$ is also
  suitably close to the Minkowski solution. By construction, in the
  compound domain~$D^+(\mathcal{K}_\star)\cup
  D^+(\mathcal{N}'_\star\cup\mathcal{N}_\star)$, the resulting
  solution to the conformal Einstein field equations is smooth and
  \emph{controlled} by the Cauchy-characteristic initial data
  on~$\mathcal{K}_\star\cup \mathcal{N}_\star$. In particular, due to
  the smoothness of the solution, one gets control at the level of the
  supremum norm over the whole of any
  derivatives~$D^+(\mathcal{K}_\star)\cup
  D^+(\mathcal{N}'_\star\cup\mathcal{N}_\star)$. Observe that as the
  domain is compact then the supremum of any of the conformal fields
  coincides with the maximum.

\item[(v)] As the solution on~$D^+(\mathcal{K}_\star)\cup
  D^+(\mathcal{N}'_\star\cup\mathcal{N}_\star)$ is smooth, it follows
  that,~$\mathbf{w}_\star$, the implied hyperboloidal initial data for
  the conformal Einstein field equations on~$\mathcal{H}_\star$ is
  smooth. Moreover, as in the conformal picture the
  hyperboloid~$\mathcal{H}_\star$ is a compact 3-manifold, it follows
  then that all the derivatives of the induced hyperboloidal data are
  in~$L^2(\mathcal{H}_\star)$ so that~$\mathbf{w}_\star\in
  H^s(\mathcal{H}_\star)$ for all~$s\geq 4$. By
  choosing~$\varepsilon>0$ in the Cauchy-characteristic initial data
  sufficiently small, one can make sure that~$\mathbf{w}_\star$ is
  suitably close to hyperboloidal data for the Minkowski solution.

\item[(vi)] Applying Friedrich's semi-global stability result for the
  Minkowski spacetime to the hyperboloidal data~$\mathbf{w}_\star$
  on~$\mathcal{H}_\star$ it follows that one obtains a smooth future
  development~$D^+(\mathcal{H}_\star)$ with Cauchy
  horizon~$H^+(\mathcal{H}_\star)$ which an be identified with future
  null infinity $\mathscr{I}^+$. The null hypersurface~$\mathscr{I}^+$
  has generators which are future complete and which intersect at a
  point~$i^+$ ---timelike infinity.

\item[(vii)] The solution to the conformal Einstein field equations
  on~$D^+(\mathcal{K}_\star)$ implies, whenever~$\Xi\neq 0$ a solution
  to the Einstein field equation which is future geodesically complete
  and with the same global asymptotic structure than the Minkowski
  spacetime.

\end{itemize}

%%%%%%%%%%%%%%%%%%%%%%%%%%%%%%%%%%%%%%%%%%%%%%%%%%%%%%%%%%%%%%%%%%%%%%%%%%%%%%%
\section*{Acknowledgements}
%%%%%%%%%%%%%%%%%%%%%%%%%%%%%%%%%%%%%%%%%%%%%%%%%%%%%%%%%%%%%%%%%%%%%%%%%%%%%%%

DH was supported by the FCT (Portugal) IF Program IF/00577/2015, by
project PTDC/MAT-APL/30043/2017 and~Project~No.~UIDB/00099/2020. PZ
acknowledges the support of the China Scholarship Council.

%%%%%%%%%%%%%%%%%%%%%%%%%%%%%%%%%%%%%%%%%%%%%%%%%%%%%%%%%%%%%%%%%%%%%%%%%%%%%%%
\appendix
%%%%%%%%%%%%%%%%%%%%%%%%%%%%%%%%%%%%%%%%%%%%%%%%%%%%%%%%%%%%%%%%%%%%%%%%%%%%%%%

%%%%%%%%%%%%%%%%%%%%%%%%%%%%%%%%%%%%%%%%%%%%%%%%%%%%%%%%%%%%%%%%%%%%%%%%%%%%%%%
\section{A conformal representation of Minkowski in Stewart's gauge}
\label{Appendix:ConformalMinkowskiStewartGauge}
%%%%%%%%%%%%%%%%%%%%%%%%%%%%%%%%%%%%%%%%%%%%%%%%%%%%%%%%%%%%%%%%%%%%%%%%%%%%%%%

In this appendix we discuss a conformal representation of the
Minkowski spacetime in Stewart's gauge. 

\smallskip
In the following,  let $(\tilde{\mathcal{M}},\tilde{\bmeta})$ denote
the Minkowski spacetime and let $\bar{x}=(x^\mu)$ correspond to
standard Cartesian coordinates so that
\begin{align*}
& \tilde{\bmeta} = \eta_{\mu\nu} \mathbf{d}x^\mu\otimes
   \mathbf{d}x^\nu\\
& \phantom{ \tilde{\bmeta}}= \mathbf{d} x^0\otimes \mathbf{d}x^0 -
   \mathbf{d}x^1\otimes\mathbf{d}x^1 -\mathbf{d}x^2\otimes
   \mathbf{d}x^2 -\mathbf{d}x^3\otimes \mathbf{d}x^3.
\end{align*}
Consider now the domain 
\begin{align*}
\tilde{\mathcal{D}} \equiv \{ p\in \mathbb{R}^4 \;|\; \eta_{\mu\nu}
x^\mu(p) x^\nu{p}<0 \}, 
\end{align*}
corresponding to the complement of the light cone through the
origin. A suitable conformal representation of this domain is obtained
via the coordinate inversion defined by
\begin{align*}
y^\mu = - \frac{x^\mu}{X^2}, \qquad X^2\equiv \eta_{\mu\nu} x^\mu x^\nu,
\end{align*}
so that 
\begin{align*}
\eta_{\mu\nu} \mathbf{d}y^\mu\otimes\mathbf{d}y^\nu
= X^{-4} \eta_{\mu\nu} \mathbf{d}x^\mu\otimes\mathbf{d}x^\nu.
\end{align*}
Accordingly defining the conformal factor $\Xi \equiv X^{-2}$ one
obtains the conformal metric $\bmeta = \Xi^2 \tilde{\bmeta}$. Observe
that
\begin{align*}
\bmeta =  \eta_{\mu\nu} \mathbf{d}y^\mu\otimes
   \mathbf{d}y^\nu,
\end{align*}
so that this conformal representation of~$\tilde{\mathcal{D}}$ is flat
---in particular, the Ricci scalar of~$\bmeta$ vanishes.  Of
particular relevance for the subsequent analysis is that future null
infinity is described by
\begin{align*}
\mathscr{I}^+ =\{ p\in\mathbb{R}^4 \;|\; y^0(p)>0,\; \eta_{\mu\nu}
y^\mu(p)y^\nu(p)=0 \}. 
\end{align*}
In order to set up Stewart's gauge consider, first, standard spherical
coordinates~$(t,t,\theta,\varphi)$ and then, in turn, double null
coordinates~$(u,v,\theta,\phi)$ such that
\begin{align*}
u =\frac{1}{\sqrt{2}}(t-r), \qquad v=\frac{1}{\sqrt{2}}(t+r),
\end{align*}
for which one has
\begin{align}
\bmeta = \mathbf{d}u\otimes\mathbf{d}v + \mathbf{d}v\otimes\mathbf{d}u
-(u-v)^2\bmsigma.
\label{MinkowskiStewart}
\end{align}
In particular, $\mathscr{I}^+$ is given by the
condition~$v=0$. Redefining~$v$ through an translation one can set the
location of~$\mathscr{I}^+$ to be given by the
condition~$v=v_\bullet$, for some~$v_\bullet>0$. A NP frame satisfying
the conditions of Stewart's gauge can be obtained by setting
\begin{align*}
\bml = \bmpartial_v, \qquad \bmn = \bmpartial_u, \qquad \bmm =
\frac{(u-v)}{\sqrt{2}}(\bmpartial_\theta + \mbox{i} \csc\theta
\bmpartial_\bmvarphi),
\end{align*}
so that 
\begin{align*}
Q=1, \qquad C^{\mathcal{A}} =0, \qquad P^2= \frac{u-v}{\sqrt{2}},
\qquad P^3= \frac{\mbox{i} (u-v)}{\sqrt{2}}\csc\theta.
\end{align*}
A computation readily shows that the NP spin connection coefficients
associated to the above tetrad are
\begin{align*}
&\kappa=\tau =\sigma=\pi=\nu=\lambda=\epsilon =\gamma=0, \\
  & \rho=\mu =\displaystyle \frac{1}{u-v}, \quad \alpha =-\beta
  = \displaystyle\frac{\cot \theta }{2\sqrt{2}(u-v)} .  
\end{align*}
The above coefficients can be readily seen to satisfy the conditions
of Stewart's gauge. Moreover, as the metric \eqref{MinkowskiStewart}
is flat one has that
\begin{align*}
& \phi_0 =\phi_1=\phi_2=\phi_3=\phi_4=0, \\
&
  \Phi_{00}=\Phi_{01}=\Phi_{02}=\Phi_{10}=\Phi_{11}=\Phi_{12}=\Phi_{20}=\Phi_{21}=\Phi_{22}=0,
                                        \\
&\Lambda =0.
\end{align*}
The above expressions imply regular \emph{characteristic data} as
given by Lemma~\ref{Lemma3}.

%%%%%%%%%%%%%%%%%%%%%%%%%%%%%%%%%%%%%%%%%%%%%%%%%%%%%%%%%%%%%%%%%%%%%%%%%%%%%%%

%%%%%%%%%%%%%%%%%%%%%%%%%%%%%%%%%%%%%%%%%%%%%%%%%%%%%%%%%%%%%%%%%%%%%%%%%%%%%%%

%%%%%%%%%%%%%%%%%%%%%%%%%%%%%%%%%%%%%%%%%%%%%%%%%%%%%%%%%%%%%%%%%%%%%%%%%%%%%%%

\begin{thebibliography}{10}
%%%%%%%%%%%%%%%%%%%%%%%%%%%%%%%%%%%%%%%%%%%%%%%%%%%%%%%%%%%%%%%%%%%%%%%%%%%%%%%

\bibitem{Luk12}
J.~Luk.
\newblock On the local existence for the characteristic initial value problem
  in general relativity.
\newblock {\em Int. Math. Res. Not.}, 20:4625, 2012.

\bibitem{Ren90}
A.~D. Rendall.
\newblock Reduction of the characteristic initial value problem to the cauchy
  problem and its application to the einstein equations.
\newblock {\em Proc. Roy. Soc. Lond. A}, 427:221, 1990.

\bibitem{HilValZha19}
D.~Hilditch, J.~A. {Valiente Kroon}, and P.~Zhao.
\newblock Revisiting the characteristic initial value problem for the vacuum
  einstein field equations.
\newblock in {\tt arXiv 1911.00047}, 2019.

\bibitem{Ste91}
J.~Stewart.
\newblock {\em Advanced general relativity}.
\newblock Cambridge University Press, 1991.

\bibitem{HilValZha20}
D.~Hilditch, J.~A. {Valiente Kroon}, and P.~Zhao.
\newblock Improved existence for the characteristic initial value problem with
  the conformal einstein field equations.
\newblock In {\tt arXiv:2006.13757[gr-qc]}, 2020.

\bibitem{LiZhu18}
J.~Li and X.-P. Zhu.
\newblock On the local extension of the future null infinity.
\newblock {\em J. Diff. Geom.}, 110:73, 2018.

\bibitem{CFEBook}
J.~A. {Valiente Kroon}.
\newblock {\em Conformal Methods in General Relativity}.
\newblock Cambridge University Press, 2016.

\bibitem{Fri84}
H.~Friedrich.
\newblock Some (con-)formal properties of {Einstein}'s field equations and
  consequences.
\newblock In F.~J. Flaherty, editor, {\em Asymptotic behaviour of mass and
  spacetime geometry. {L}ecture notes in physics 202}. Springer Verlag, 1984.

\bibitem{Fri86b}
H.~Friedrich.
\newblock On the existence of n-geodesically complete or future complete
  solutions of {E}instein's field equations with smooth asymptotic structure.
\newblock {\em Comm. Math. Phys.}, 107:587, 1986.

\bibitem{LueVal09}
C.~L\"ubbe and J.~A. {Valiente Kroon}.
\newblock On de sitter-like and minkowski-like spacetimes.
\newblock {\em Class. Quantum Grav.}, 26:145012, 2009.

\bibitem{AndChrFri92}
L.~Andersson, P.~T. Chru\'{s}ciel, and H.~Friedrich.
\newblock On the regularity of solutions to the {Yamabe} equation and the
  existence of smooth hyperboloidal initial data for {Einstein}'s field
  equations.
\newblock {\em Comm. Math. Phys.}, 149:587, 1992.

\bibitem{AndChr93}
L.~Andersson and P.~T. Chru\'sciel.
\newblock Hyperboloidal cauchy data for vacuum einstein equations and
  obstructions to smoothness of null infinity.
\newblock {\em Phys. Rev. Lett.}, 70:2829, 1993.

\bibitem{AndChr94}
L.~Andersson and P.~T. Chru\'{s}ciel.
\newblock On ``hyperboloidal'' {Cauchy} data for vacuum {Einstein} equations
  and obstructions to smoothness of scri.
\newblock {\em Comm. Math. Phys.}, 161:533, 1994.

\bibitem{Sze65}
P.~Szekeres.
\newblock The gravitational compass.
\newblock {\em J. Math. Phys.}, 6:1387, 1965.

\bibitem{PenRin86}
R.~Penrose and W.~Rindler.
\newblock {\em Spinors and space-time. {V}olume 2. {S}pinor and twistor methods
  in space-time geometry}.
\newblock Cambridge University Press, 1986.

\bibitem{Gra20}
O.~Graf.
\newblock Global nonlinear stability of minkowski space for
  spacelike-characteristic initial data.
\newblock In {\tt arXiv:2010.12434 [math.AP]}, 2020.

\bibitem{PenRin84}
R.~Penrose and W.~Rindler.
\newblock {\em Spinors and space-time. {V}olume 1. {T}wo-spinor calculus and
  relativistic fields}.
\newblock Cambridge University Press, 1984.

\bibitem{SteFri82}
J.~M. Stewart and H.~Friedrich.
\newblock Numerical relativity. {T}he characteristic initial value problem.
\newblock {\em Proc. Roy. Soc. Lond. A}, 384:427, 1982.

\bibitem{Fri86a}
H.~Friedrich.
\newblock On purely radiative space-times.
\newblock {\em Comm. Math. Phys.}, 103:35, 1986.

\bibitem{HawEll73}
S.~W. Hawking and G.~F.~R. Ellis.
\newblock {\em The large scale structure of space-time}.
\newblock Cambridge University Press, 1973.

%%%%%%%%%%%%%%%%%%%%%%%%%%%%%%%%%%%%%%%%%%%%%%%%%%%%%%%%%%%%%%%%%%%%%%%%%%%%%%%
\end{thebibliography}
\end{document}